\newcommand{\Id}{\operatorname{Id}}
\newtheorem{theorem}{Theorem}
\newtheorem{proposition}{Proposition}
\newtheorem{lemma}{Lemma}
\theoremstyle{definition}
\newtheorem{remark}{Remark}
\theoremstyle{definition}
\newtheorem{definition}{Definition}
\title{
Diffusion Limit of the Low-Density Magnetic Lorentz Gas 

}
\author{Alessia Nota\footnote{Gran Sasso Science Institute.\\ E-mail: alessia.nota@gssi.it}, Dominik Nowak\footnote{University of Basel, Department of Mathematics and Computer Science.\\ E-mail: dominik.nowak@unibas.ch}, 
Chiara Saffirio\footnote{University of Basel, Department of Mathematics and Computer Science.\\ E-mail: chiara.saffirio@unibas.ch} \footnote{University of British Columbia, Department of Mathematics.\\ E-mail: saffirio@math.ubc.ch}}
\date{\today}
\begin{document}

\maketitle

\begin{abstract}
We consider the magnetic Lorentz gas proposed by Bobylev et al.~\cite{bobylev_two-dimensional_1995}, which describes a point particle moving in a random distribution of hard-disk obstacles in $\mathbb{R}^2$ under the influence of a constant magnetic field perpendicular to the plane. We show that, in the coupled low-density and diffusion limit, when the intensity of the magnetic field is smaller than $\frac{8\pi}{3}$, the non-Markovian effects induced by the magnetic field become sufficiently weak. Consequently, the particle's probability distribution converges to the solution of the heat equation with a diffusion coefficient dependent on the magnetic field and given by the Green-Kubo formula. This formula is derived from the generator of the generalized Boltzmann process associated with the generalized Boltzmann equation, as predicted in~\cite{bobylev_two-dimensional_1995}.
\end{abstract}

\section{Introduction and Main Result}

{\it The magnetic Lorentz gas.} We consider a classical test particle of mass $m=1$ and charge $q=-1$ moving in two dimensions within a bounded, measurable region $\Lambda \subset \mathbb{R}^2$, and subject to a homogeneous, constant magnetic field oriented along the positive $z$-axis, perpendicularly to the plane. 
The charged particle then moves under the action of the Lorentz force $F(v) = - (v \times B)=-v^\perp \abs{B}$ and undergoes elastic collisions with $N$ stationary disks of radius $\varepsilon > 0$, whose centers 
\begin{equation}
    \bm{c}_N = (c_1,...,c_N) \subset \Lambda^N
\end{equation}
are Poisson distributed with intensity $\mu>0$, with $c_i$ being the center of the $i$-th obstacle. The probability of finding $N$ obstacles in the region $\Lambda$ is given by
\begin{equation}\label{eq:Poisson}
    \mathbb{P}(\dd \bm{c}_N) = e^{-\mu \abs{\Lambda}} \frac{\mu^N}{N!} \dd \bm{c}_N,
\end{equation}
where $\dd \bm{c}_N$ is shorthand for $\dd c_1 ... \dd c_N$ and $\abs{\Lambda}$ denotes the Lebesgue measure of $\Lambda$.
 Between collisions, the point particle moves counterclockwise on circular orbits with Larmor radius $R = \abs{v}/\abs{B}$ and angular velocity with magnitude $\widetilde{\Omega} = \abs{B}$. Setting $v \in S_1$ without loss of generality, the cyclotron period is $\widetilde{T}= 2 \pi/ \widetilde{\Omega} = 2\pi/\abs{B}$. In the following, the aforementioned model will be referred to as {\it the magnetic Lorentz gas}.

{\it Scaling regimes.} Let $\varepsilon>0$ be the radius of each disk. 
We rescale the intensity $\mu$ of the obstacles as 
\begin{equation}\label{eq:scaling}
\mu_{\varepsilon}=\varepsilon^{-(d-1)}\eta_\varepsilon\mu,
\end{equation}
which for $d=2$ reads $\mu_\varepsilon=\varepsilon^{-1}\eta_{\varepsilon}\mu$, where, from now on, $\mu>0$ is fixed and $\eta_{\varepsilon}$ is slowly diverging as $\varepsilon\to 0$. More precisely we assume that $\eta_{\varepsilon}$ is such that $\mu_\varepsilon \varepsilon^2\to 0$ and $\mu_\varepsilon \varepsilon\to \infty$ (e.g., cf.~\cite{basile_derivation_2015}, see also~\cite{bodineau_brownian_2016,erdos_quantum_2008} for a similar scaling). We will then denote by $\mathbb{P}^{\varepsilon}$ the rescaled probability density~\eqref{eq:Poisson}, where $\mu$ is replaced by $\mu_\varepsilon$, and by $\mathbb{E}^{\varepsilon}$ the expectation with respect to the measure $\mathbb{P}^{\varepsilon}$.

\noindent We recall that in the standard Boltzmann--Grad regime the intensity $\mu$ of the Poisson distribution is rescaled as $\mu_\varepsilon=\varepsilon^{-1}\mu$ and $\mu_\varepsilon\varepsilon=\ell^{-1}=O(1)$ where $\ell$ is the mean free path that is of order $1$, i.e.~a particle has on average one collision per unit time. Notice that this scaling where $t$, $x$, $v$ are kept fixed, but the radius of any disk and the density of scatterers is rescaled, is equivalent to the scaling in which the radius of any disk is not scaled, but one rescales hyperbolically the space variable $x$ and the time variable $t$ making the density simultaneously vanishing suitably (cf.~\cite{spohn_kinetic_1980}).

\noindent In the scaling limit~\eqref{eq:scaling} we consider in this paper,
the gas is slightly more dense than in the standard Boltzmann--Grad regime discussed above. 
More precisely, the gas is still dilute ($\mu_\varepsilon \varepsilon^2\to 0$ as $\varepsilon\to 0$), but now the mean free path of the tagged particle diverges ($\mu_\varepsilon \varepsilon\to \infty$ as $\varepsilon\to 0$) slow enough to guarantee a dilute configuration of scatterers.
Notice that, according to~\eqref{eq:scaling}, we  implicitly  further rescale the time variable, and consequently also the Larmor time, by a factor $\eta_\varepsilon$. Namely,  $\widetilde{T}\to \eta^{-1}_\varepsilon \widetilde{T}$ and we set ${T}$ to be the rescaled Larmor time \begin{equation}\label{eq:scaled-Larmor}{T}=\eta^{-1}_\varepsilon \widetilde{T}.
\end{equation} 
Alternatively, from the scaling~\eqref{eq:scaling} one sees that the angular velocity $\widetilde{\Omega}$ of the test particle undergoes the transformation $\widetilde{\Omega}\to{\eta_\varepsilon}^{-1}\widetilde{\Omega}$, i.e.~the macroscopic angular velocity $\Omega:=\eta_{\varepsilon}^{-1}\widetilde{\Omega}$ and hence the Larmor time becomes $T=\frac{2\pi}{\eta_\varepsilon\Omega}$ in macroscopic variables.

\noindent Due to the fact that the gas is dilute in the regime~\eqref{eq:scaling}, at mesoscopic scale, the dynamics of the magnetic Lorentz gas will be approximated by the generalized Boltzmann equation~\eqref{eq:LinMagBoltzmann}
with a diverging factor $\eta_\varepsilon$ in front of the collision operator, which corresponds to a regime in which the tagged particle undergoes a diverging number of collisions
per unit of time. This suggests that we can then look at a longer time scale $\eta_\varepsilon t$, i.e. 
\begin{equation}\label{eq:hydro-timescale}
t\to \eta_{\varepsilon}t,
\end{equation}
where, with a slight abuse of notation, we are  now denoting by $t$ the macroscopic time scale.  
On this longer time scale, we capture the diffusive behavior of the system, namely a heat equation 
arises, which provides the hydrodynamics of the magnetic Lorentz gas.    
We emphasize that we can implement this program thanks to the fact that we can provide explicit estimates of the set of pathological  configurations in the kinetic approximation (cf.~\eqref{eq:errorest}),  which allow stronger divergence in time and density, and then to obtain the diffusive behavior of the system. More precisely, the explicit control of the error in the kinetic limit 
suggests the precise scale of times for which the diffusive limit can be achieved. For this reason we will assume that as $\varepsilon \to 0$, $\eta_\varepsilon$ diverges in such a way that 
\begin{equation}\label{eq:cdteta}
\varepsilon^{\frac{1}{2}}\eta_\varepsilon^5\to 0.
\end{equation}
This is indeed dictated by the error estimate~\eqref{eq:errorest} below: imposing~\eqref{eq:cdteta}, we guarantee that the error in~\eqref{eq:errorest} is still vanishing on the long time scale $\eta_\varepsilon t$ as $\varepsilon\to 0$.

\noindent The aim of this paper is to study the long-time behavior of the magnetic Lorentz gas in the low-density limit~\eqref{eq:scaling} by giving a rigorous proof of the behavior conjectured in~\cite{bobylev_two-dimensional_1995}. 

{\it Kinetic description.} In absence of the magnetic field, Gallavotti~\cite{gallavotti_divergences_1969} proved that, in the Boltzmann--Grad limit, i.e.~$\mu_{\varepsilon}\varepsilon=O(1)$, the test particle's dynamics can be approximated by the Lorentz kinetic equation, also referred to as linear Boltzmann equation. However, as first observed in~\cite{bobylev_two-dimensional_1995}, the presence of the external magnetic field introduces new interesting memory effects, which are illustrated by the following simple computation. Let $\mathcal{C}$ be a configuration such that no obstacles are located in an annulus $\mathcal{A}^\varepsilon(R)$ defined by the radii $R-\varepsilon$ and $R +\varepsilon$. The probability that the test particle performs a complete cyclotron orbit without hitting an obstacle is then given by
\begin{equation} \label{eq:Prob Circling}
    \mathbb{P}^\varepsilon (\mathcal{C}) = e^{- \mu_\varepsilon \abs{\mathcal{A}^\varepsilon(R)}} = e^{-4 \pi R\mu\eta_\varepsilon} = e^{-2\mu\eta_\varepsilon\widetilde{T}}=e^{-2\mu T},
\end{equation}
where in the last identity we used~\eqref{eq:scaled-Larmor}.
  As $\varepsilon \to 0$, the probability that the test particle is trapped in a cyclotron orbit is non-vanishing. Hence, the standard linear Boltzmann equation cannot describe the model's kinetic behavior. The simple reasoning in~\eqref{eq:Prob Circling} suggests also a non-vanishing probability that the test particle encounters multiple subsequent collisions with the same obstacle, referred to as self-recollisions. This non-Markovian behavior has been studied in~\cite{bobylev_two-dimensional_1995,bobylev_there_1997} (see also~\cite{kuzmany_magnetotransport_1998}) and rigorously derived in the Boltzmann--Grad limit (i.e.~$\mu_\varepsilon\varepsilon=O(1)$) in~\cite{nota_two-dimensional_2022}, where it was shown that the one-particle correlation function $f^\varepsilon$ in~\eqref{eq:fep} converges in $L^1$ to a solution of the \textit{generalized Boltzmann equation}
\begin{equation} \label{eq:LinMagBoltzmann}
    \begin{aligned}
	D_t f^{G}(t,x,v) &= \mu \sum_{k=0}^{[t/T]} e^{-2 k T}\int_{S_{1}}  (v \cdot n)_{+} \left \{\sigma_{n}-1\right\} f^{G}(t-k T, S^{(k)}_{n}(x,v) )\,\dd n\\
    &=: \mathcal{L}^G f^{G}(t,x,v),
    \end{aligned}
\end{equation}
where $D_t:= (\partial_t+  v \cdot \nabla_x -(v \times B) \cdot \nabla_v)$ denotes the material derivative, $n \in S_{1}$ is the scattering vector and $[w]$ takes the integer part of $w$. 
The operator $\sigma_{n}$ acts as
\begin{equation}
	\sigma_{n} F(x,v) =F(S_n^{(1)}(x,v))= F(x,v')= F(x,v-2(v\cdot n)n)
\end{equation}
for any function $F$, and $v'$ denotes the post-collisional velocity.
Furthermore, for $k>1$, $S^{(k)}_{n}(x,v) = (x,R_{k\vartheta}(v))$, where
\begin{equation}
	R_{\vartheta} = 
	\begin{pmatrix}
	\cos(\vartheta) & - \sin(\vartheta)\\
	\sin(\vartheta) & \cos(\vartheta)
	\end{pmatrix}
	.
\end{equation}
The density $f^{G}$ takes into account the probability that an electron is \textit{circling} (see Fig.~\ref{fig:circling}) or \textit{wandering} (see  Fig.~\ref{fig:wandering}).
\begin{figure}
     \centering
     \begin{subfigure}[t]{0.45\textwidth}
         \centering
         \includegraphics[width=\textwidth]{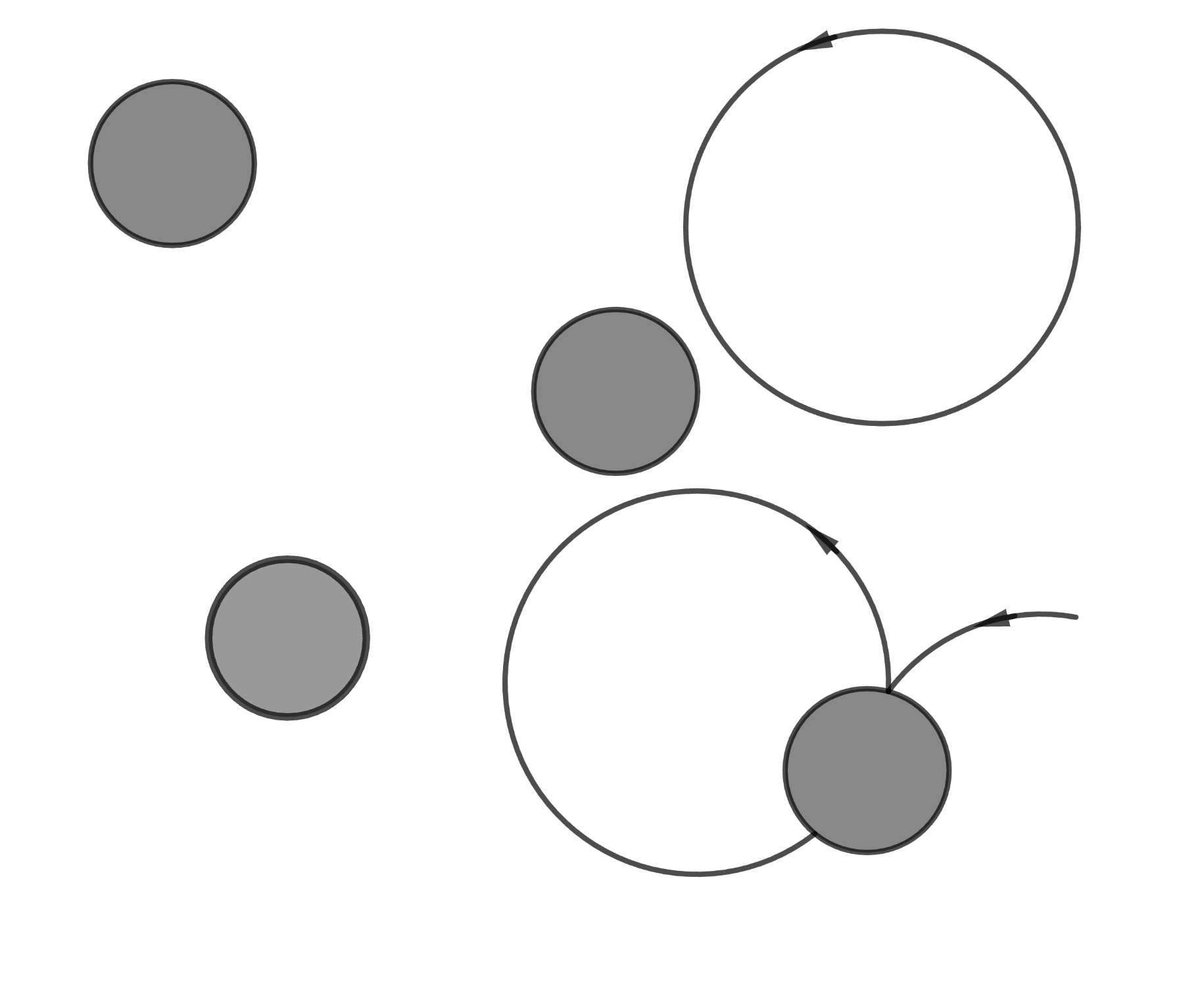}
         \caption{Circling light particle}
         \label{fig:circling}
     \end{subfigure}
     \hfill
     \begin{subfigure}[t]{0.45\textwidth}
         \centering
         \includegraphics[width=\textwidth]{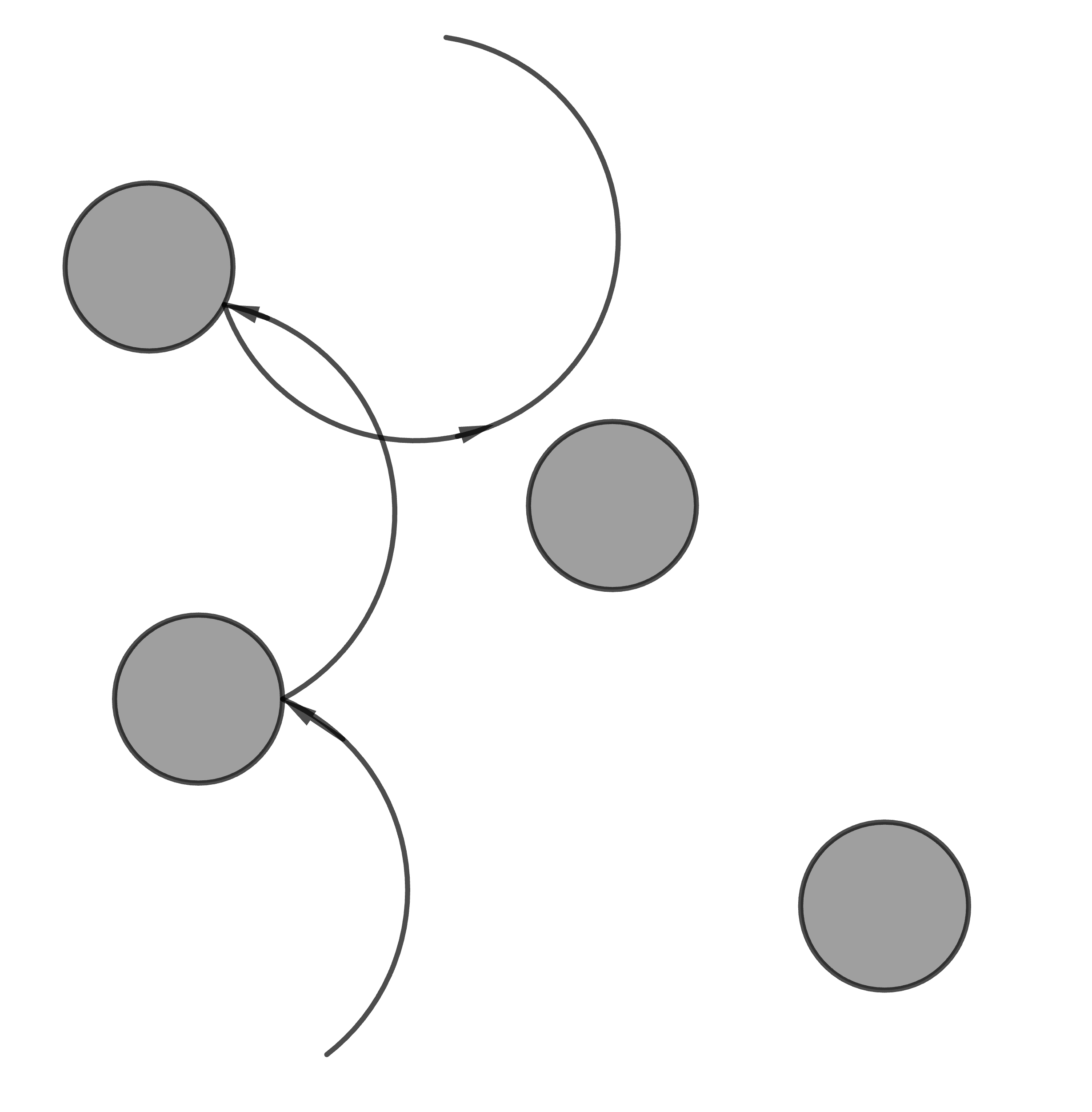}
         \caption{Wandering light particle}
         \label{fig:wandering}
     \end{subfigure}
        \caption{Realisations of possible trajectories}
        \label{fig:PossibleTraj}
\end{figure}
Clearly, 
\begin{equation}
f^{G}(t,x,v) =
	\begin{cases}
	f^{C}(t,x,v)+f^{W}(t,x,v), & 0<t<T\\
	f^{W}(t,x,v), &t>T ,
	\end{cases}
\end{equation}
where the superscript $C$ and $W$ denotes the fraction of circling and wandering electrons, respectively. 
Since $\int_{\mathbb{R}^2 \times S_{1}} f^{G}(t,x,v) \dd x \dd v = 1$ for any $t$, we have
\begin{equation}
\int_{\mathbb{R}^2\times S_{1}} f^{G}(t,x,v) \dd x \dd v =
\begin{cases}
1,& 0<t<T\\
1-e^{-2T},& t>T
\end{cases}
\end{equation}
when taking into account the fraction of circling electrons for $t>T$.\\
Equation~\eqref{eq:LinMagBoltzmann} suggests a splitting of the operator into a Markovian, as well as a non-Markovian part. The non-Markovianity appears since self-recollisions are admissible and are taken into account by the sum over $k$ weighted by the factor $e^{-2kT}$. We find that the right hand side of~\eqref{eq:LinMagBoltzmann} can be rewritten by grouping the sum as follows
\begin{equation}\label{eq:LinMagBoltzmann2}
    \begin{aligned}
        \mathcal{L}^G f^{G} = &\mu \sum_{k=0}^{[t/T]} e^{-2 k T}\int_{S_{1}}  (v \cdot n)_{+} \left \{\sigma_{n}-1\right\} f^{G}(t-k T, S^{(k)}_{n}(x,v) )\,\dd n\\
        = &\mu \int_{S_1} (v\cdot n)_+ [f^G(v') - f^G(v)]\,\dd n\\
        &+ \mu \sum_{k=1}^{[t/T]} e^{-2 k T}\int_{S_{1}}  (v \cdot n)_{+} \left \{\sigma_{n}-1\right\} f^{G}(t-k T, S^{(k)}_{n}(x,v) )\,\dd n\\
        =:& \mathcal{L} f^G + \mathcal{M}f^G.
    \end{aligned}
\end{equation}
Indeed for $k=0$, we recover the linear Boltzmann operator in absence of $B$
\begin{equation}\label{eq:L-operator}
    \mathcal{L} f^G = \mu \int_{S_1} (v\cdot n)_+ [f^G(v') - f^G(v)]\,\dd n,
\end{equation} 
while $\mathcal{M}$ is the contribution due to the presence of the external magnetic field 
\begin{equation}\label{eq:M-operator}
    \mathcal{M} f^{G}(t,x,v) := \mu \sum_{k=1}^{[t/T]} e^{-2kT} \int_{S_1}  (v\cdot n )_+ \left \{\sigma_n - 1\right\} f^G(t-kT,S_n^{(k)}(x,v))\,\dd n,
\end{equation}
which contains the memory terms.\\
Notice that the rescaling of the Larmor time does not affect the generalized Boltzmann collision operator $\mathcal{L}^G$, which is homogeneous in $t$ and $T$.

{\it Main result.} Our main result builds on this splitting into Markovian and non-Markovian operators. On the hydrodynamic time scale described in~\eqref{eq:scaling} and~\eqref{eq:hydro-timescale}, we prove that the dynamics of the moving particle is given by the heat equation. The main novelty consists in the form of the diffusion coefficient (see~\eqref{eq:GreenKubo}-\eqref{eq:D-Split} below), given by the Green-Kubo formula associated to the operator $\mathcal{L}^G$ in~\eqref{eq:LinMagBoltzmann}, which keeps track of the non-Markovian terms induced by the external magnetic field. More precisely, for a given configuration of obstacles $\bm{c}_N$, we denote by $\gamma^{t}_{\bm{c}_{N},\varepsilon}(x,v)$, $t \in \mathbb{R}$ the (backward) Hamiltonian flow, solution to the Newton equations for a particle with initial state $(x,v)$, in a given sample $\bm{c}_N$ of obstacles of radius $\varepsilon$. For a given initial datum $f_0=f_0(x,v)$, the particle distribution at time $t > 0$ is
\begin{equation}\label{eq:fep}
f^{\varepsilon}(t,x,v):=\mathbb{E}^{\varepsilon}[f_0(\gamma^{-t}_{\bm{c}_{N},\varepsilon}(x,v))\, \mathbbm{1}_{\{\min_i |x-c_i| >\varepsilon\}}]\;.
\end{equation}
Then our main result reads

\begin{theorem} \label{thm:MainTheorem}
    Let $f_0 \in C_c(\mathbb{R}^2 \times S_1)$ be a compactly supported continuous probability density and assume that all first and second derivatives in $x$ and $v$ of $f_0$ are bounded.  
    Let $f^\varepsilon$ be defined as in~\eqref{eq:fep}.  
    Fix a positive number $\bar{t}\in\mathbb{R}$ and let {$\mu_\varepsilon = \varepsilon^{-1}\eta_\varepsilon \mu$} and 
    ${h}^\varepsilon(t,x,v):= f^\varepsilon({t \eta_\varepsilon}, x,v)$. Then, for $\abs{B} \in \left[0,\frac{8 \pi}{3}\right)$ and all $t \in [0,\bar{t}\,)$, 
    \begin{equation}
        \lim_{\varepsilon \to 0} \norm{{h}^\varepsilon(t,\cdot,\cdot) - \rho(t,\cdot)}_{L^2(\mathbb{R}^2 \times S_1)} = 0,
    \end{equation}
    where $\rho = \rho(t,x)$ solves the following heat equation 
    \begin{equation} \label{eq:HeatEquation}
        \begin{cases}
            \partial_t \rho = D_B \Delta_x \rho\\
            \rho(0,x) = \langle f_0 \rangle,
        \end{cases}
    \end{equation}
    with $\langle f_0 \rangle =: \frac{1}{2 \pi} \int_{S_1} f_0 (x,v) \dd v$ and the diffusion coefficient $D_B$ given by the Green-Kubo formula
    \begin{equation} \label{eq:GreenKubo}
        D_B =  \frac{1}{2 \pi} \int_{S_1} v \cdot \left(-\mathcal{L}^G \right)^{-1} v \dd v = \int_0^\infty \mathbb{E} \left[ v \cdot V(t,v)\right] \dd t,
    \end{equation}
    where $V(t,v)$ is the stochastic process generated by $\mathcal{L}^G$ starting from $v$ and $\mathbb{E}[\cdot]$ denotes the expectation with respect to the invariant measure on $S_1$.

\noindent    Furthermore, 
    \begin{equation} \label{eq:D-Split}
        D_B =  \frac{1}{2 \pi} \int_{S_1} v \cdot (-\mathcal{L})^{-1} v \dd v +  \frac{1}{2 \pi}\sum_{k=1}^\infty \int_{S_1} v\cdot (-\mathcal{L})^{-1} \left[\mathcal{M}(-\mathcal{L})^{-1}\right]^k v\,\dd v,
    \end{equation}
where $\mathcal{L}$ and $\mathcal{M}$ are defined in~\eqref{eq:L-operator} and~\eqref{eq:M-operator}, respectively.
\end{theorem}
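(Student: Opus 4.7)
The plan is to split the analysis into two stages. First, a kinetic approximation step that controls in $L^2$ the distance between the particle distribution $h^\varepsilon$ and the corresponding rescaled solution $h^G(t,x,v) := f^G(\eta_\varepsilon t, x, v)$ of the generalized Boltzmann equation \eqref{eq:LinMagBoltzmann}. Second, a hydrodynamic step in which $h^G$ is shown to converge to $\rho$. For the first stage I would extend the convergence argument of \cite{nota_two-dimensional_2022} to the longer time scale $\eta_\varepsilon t$ by explicitly quantifying the measure of pathological scatterer configurations (multiple non-self recollisions, interferences between distinct Larmor orbits), thereby establishing an error bound of the form \eqref{eq:errorest}. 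The assumption \eqref{eq:cdteta}, namely $\varepsilon^{1/2}\eta_\varepsilon^5 \to 0$, is precisely what forces this error to vanish on the diffusive scale.

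On the diffusive time scale $h^G$ satisfies (approximately)
\[
\partial_t h^G + \eta_\varepsilon \bigl(v\cdot\nabla_x - (v\times B)\cdot\nabla_v\bigr) h^G = \eta_\varepsilon^2\, \mathcal{L}^G h^G,
\]
to which I would apply a Hilbert expansion $h^G = h_0 + \eta_\varepsilon^{-1} h_1 + \eta_\varepsilon^{-2} h_2 + \ldots$. Matching orders: $\mathcal{L}^G h_0 = 0$ forces $h_0 = \rho(t,x)$ to lie in $\ker \mathcal{L}^G$, which (since both $\mathcal{L}$ and $\mathcal{M}$ annihilate constants) consists of $v$-independent functions. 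The next order yields the corrector equation $\mathcal{L}^G h_1 = v\cdot\nabla_x \rho$, so $h_1 = \chi(v)\cdot\nabla_x\rho$ with $\chi = (\mathcal{L}^G)^{-1} v$. The $O(1)$ solvability condition, obtained by averaging over $S_1$ (the magnetic transport term integrates to zero by angular periodicity, and $\mathcal{L}^G$ preserves mass), yields $\partial_t \rho = D_B\Delta_x \rho$ with $D_B$ given by the Green-Kubo formula~\eqref{eq:GreenKubo}, where rotational invariance of $\mathcal{L}^G$ collapses the a priori tensor-valued diffusion coefficient to a scalar.

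The main obstacle is the rigorous construction of the corrector $\chi = (-\mathcal{L}^G)^{-1} v$, i.e.\ the inversion of $\mathcal{L}^G = \mathcal{L} + \mathcal{M}$ on the orthogonal complement of constants in $L^2(S_1)$. I would build this inverse via the Neumann series
\[
(-\mathcal{L}^G)^{-1} = (-\mathcal{L})^{-1} \sum_{k=0}^{\infty} \bigl[\mathcal{M}(-\mathcal{L})^{-1}\bigr]^k,
\]
which produces the expansion \eqref{eq:D-Split} directly. Convergence in operator norm requires $\|\mathcal{M}(-\mathcal{L})^{-1}\|_{L^2(S_1)} < 1$: since $\mathcal{M}$ carries the geometric memory weight $\sum_{k\ge 1} e^{-2kT}$ with $T = 2\pi/|B|$, and $(-\mathcal{L})^{-1}$ is controlled by the spectral gap of the standard Lorentz operator on the mean-zero subspace, a careful estimate of $\|\mathcal{M}\|$ and $\|(-\mathcal{L})^{-1}\|$ pins the threshold at $|B| < 8\pi/3$. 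A related subtlety is that \eqref{eq:M-operator} truncates the memory sum at $k = \lfloor t/T\rfloor$, whereas the stationary corrector uses the full infinite series; this is justified by the exponential decay $e^{-2kT}$ of the weights, restricting attention to $t\gg T$. Finally, the probabilistic identity $D_B = \int_0^\infty \mathbb{E}[v\cdot V(t,v)]\,\dd t$ follows from the resolvent formula $(-\mathcal{L}^G)^{-1} v = \int_0^\infty e^{t\mathcal{L}^G} v\,\dd t$ combined with the identification of the semigroup $e^{t\mathcal{L}^G}$ with the process $V(t,v)$.

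To conclude the $L^2$ convergence claimed in the theorem, I would upgrade the formal Hilbert expansion to a rigorous energy estimate on the remainder $r^\varepsilon := h^G - \rho - \eta_\varepsilon^{-1}\chi\cdot\nabla_x\rho$, exploiting the spectral gap of $\mathcal{L}^G$ (inherited from $\mathcal{L}$ in the perturbative regime $|B| < 8\pi/3$) together with uniform regularity of $\rho$ coming from parabolic theory and the assumed smoothness of $f_0$. Combined with the kinetic-approximation error from the first stage, this yields $\|h^\varepsilon(t,\cdot,\cdot) - \rho(t,\cdot)\|_{L^2(\mathbb{R}^2\times S_1)} \to 0$ for every $t \in [0,\bar t\,)$.
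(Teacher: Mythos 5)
Your proposal follows essentially the same architecture as the paper: (i) quantitative kinetic approximation of the Lorentz process by the generalized Boltzmann equation, with an explicit error bound of the form $\lvert E^\varepsilon(t)\rvert \le C\varepsilon^{1/2}\eta_\varepsilon^3 t^2$ so that the error vanishes on the $\eta_\varepsilon t$ scale under $\varepsilon^{1/2}\eta_\varepsilon^5\to 0$; (ii) a truncated Hilbert expansion of the rescaled kinetic solution, with the corrector equation $\mathcal{L}^G g^{(1)} = v\cdot\nabla_x g^{(0)}$ giving the Green--Kubo diffusion coefficient; (iii) inversion of $\mathcal{L}^G=\mathcal{L}+\mathcal{M}$ by a Neumann series $(-\mathcal{L}^G)^{-1}=(-\mathcal{L})^{-1}\sum_{k\ge 0}[\mathcal{M}(-\mathcal{L})^{-1}]^k$, producing the decomposition \eqref{eq:D-Split}; (iv) an energy estimate on the Hilbert remainder. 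That is precisely the paper's route (Propositions~\ref{prop:ErrorEstimates},~\ref{prop: Hilbert Expansion}, Lemma~\ref{lem: Inversion LG}, Proposition~\ref{prop:D-explicit}).

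Two technical points are left unresolved in your sketch and are worth flagging. First, you phrase the Neumann-series convergence as a condition in $L^2(S_1)$, asserting that a "careful estimate" of $\|\mathcal{M}\|$ and $\|(-\mathcal{L})^{-1}\|$ pins the threshold at $\lvert B\rvert<8\pi/3$. In the paper, however, the inversion is carried out in $L^\infty(S_1)$ (Lemma~\ref{lem: Inversion LG}), and the threshold is extracted from the specific constant $\beta=(\pi-2)/2$ in the $L^\infty$ bound $\|Kg\|_{L^\infty}\le\beta\|g\|_{L^\infty}$ quoted from the earlier literature, combined with the geometric-series bound $\|\mathcal{M}\|\le \mu\,\tfrac{e^{-2T}}{1-e^{-2T}}(2\beta+2)$. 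The operator norm of $K$ on mean-zero functions in $L^2$ is a different number, so one cannot simply assert the same cutoff $8\pi/3$ in an $L^2$ framework without redoing the computation; your claim that it "pins the threshold" is not justified as stated. Second, the kinetic error control (Proposition~\ref{prop:ErrorEstimates}) is naturally an $L^1$ bound, and the paper transfers it to $L^2$ by exploiting the compact support of $f_0$ (propagated to $f^\varepsilon$ and to the Boltzmann solution), whereas your first stage claims $L^2$ control directly; the missing observation about compact support is needed to close the triangle inequality in $L^2$.
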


\noindent Some remarks are in order. 

\begin{remark}
 We stress that our result provides a direct limit from the microscopic dynamics to the hydrodynamic scale, using the kinetic scale as a bridge between the two. 
 \end{remark}

\begin{remark} As pointed out in~\cite{bobylev_two-dimensional_1995, kuzmany_magnetotransport_1998}, a large magnetic field would prevent the diffusive behavior to take place. For $\abs{B}=\infty$, this follows easily from equation~\eqref{eq:Prob Circling}, recalling that $T=\frac{2\pi}{\abs{B}}$. Indeed, by formally taking $\abs{B}=\infty$ in~\eqref{eq:Prob Circling}, the particle is trapped in a periodic orbit with probability one, hence the generalized Boltzmann equation does not give a good approximation of such situation and therefore no diffusion takes place at the hydrodynamic scale (see also Remark~\ref{rem:B-infty}). As conjectured in~\cite{bobylev_two-dimensional_1995}, there should exist a critical value $B_c$ such that for $\abs{B} >B_c$ there is no diffusive behavior, whereas for $\abs{B} <B_c$ the hydrodynamics is given by equation~\eqref{eq:HeatEquation}. With our techniques, we are at the moment able to prove the diffusive behavior only for $\abs{B}<\frac{8\pi}{3}$. From our approach, it is not clear whether $B_c=\frac{8\pi}{3}$, as the constraint $\abs{B}<\frac{8\pi}{3}$ comes as a technical assumption to ensure that the operator $\mathcal{L}^G$ is invertible (see Lemma~\ref{lem: Inversion LG} below).  
\end{remark}

\begin{remark}
 Although our result considers densities that are slightly more dense than the usual Boltzmann--Grad regime, our scaling is still compatible with a rarefied situation. Higher densities have been the object of numerical studies in~\cite{bobylev_two-dimensional_1995, kuzmany_magnetotransport_1998}, where the authors find critical values of the density for which the moving particle stays trapped in a cage formed by overlapping obstacles and by the magnetic field. However, an analytical proof is still missing and will be the object of future study.\end{remark}
\begin{remark} The form of the diffusion coefficient in~\eqref{eq:D-Split} reveals the nature of the underlying process: the first term accounts for the Markovian part, while the second introduces memory effects of unbounded order, depending on the strength of the magnetic field, thereby supporting the conjecture proposed in~\cite{bobylev_two-dimensional_1995}. Moreover, 
our findings are in agreement with the fact that, for $B=0$, the diffusion coefficient should coincide with the formula obtained in the standard hydrodynamic limit of the Lorentz gas (see~\cite{esposito_chapter_2005}). This can be easily seen by taking the limit $B\to 0$ in~\eqref{eq:D-Split}. We also mention that the diffusion coefficient appears in front of the Laplacian in~\eqref{eq:HeatEquation}, modulating the diffusion process depending on the intensity of the magnetic field. This is most visible from the explicit expression of $D_B$ given in~\eqref{eq:D-Split}.
\end{remark}
\begin{remark}\label{remark:lutsko-toth} Recently, Lutsko and T\'oth~\cite{lutsko_diffusion_2024} studied the magnetic Lorentz gas and proved that, by first applying the Boltzmann--Grad limit and then taking the limit as time goes to infinity, the rescaled trajectory converges to a Brownian motion modulated by a $B$-dependent random variable $\alpha$. When $\abs{B}=\infty$, $\alpha=0$ with probability one, and the rescaled trajectory converges to zero. This follows by a highly non-trivial adaptation of the coupling method developed by the same authors in~\cite{lutsko_invariance_2020}. Their result captures the absence of diffusion for $\abs{B}=\infty$, as well as the qualitative deterioration of the diffusion process as $B$ increases, in the spirit of~\cite{bobylev_two-dimensional_1995}. Our result differs from the one in~\cite{lutsko_diffusion_2024} in the following aspects: while it does not capture the concentration phenomena for large, finite magnetic fields due to the formation of trapping orbits, it gives an explicit expression for the diffusion coefficient $D_B$ through the Green-Kubo formula, highlighting the dependence of $D_B$ on the magnetic field. 
Furthermore, we stress that the techniques used and the scaling regime considered are different and provide complementary information on the hydrodynamics.
\end{remark}

{\it State of the art.} 
The Lorentz gas is a simple but highly non-trivial model which has been proposed by H. A. Lorentz in 1905 to explain the motion of electrons in metals. It represents a rare source of exact results in kinetic theory, providing a concrete example in which microscopic reversibility can be reconciled with macroscopic irreversibility.
Indeed, from this model, one can prove, under suitable scaling limits, a validation of linear kinetic equations, and, from these, of diffusion equations. 
The first rigorous mathematical result in this direction was proved in~\cite{gallavotti_divergences_1969}, where the linear Boltzmann equation has been derived in the Boltzmann--Grad regime.
We refer to ~\cite{spohn_lorentz_1978, lebowitz_transport_1978,boldrighini_boltzmann_1983,desvillettes_linear_1999,basile_derivation_2015,nota_diffusive_2015,lutsko_invariance_2020, nota_theory_2018} for related results and later developments.

The presence of a given external field strongly affects the derivation of the linear Boltzmann equation in the low-density limit. In~\cite{ 
bobylev_two-dimensional_1995} and later in~\cite{bobylev_there_1997,bobylev_liouville_2001}, it has been shown that the motion of a test particle in a plane with a Poisson distribution of hard disks and a uniform and constant magnetic field perpendicular to the plane heuristically leads to the generalized Boltzmann equation with memory terms. We also mention that this model has been studied numerically in~\cite{kuzmany_magnetotransport_1998}, where several regimes have been identified. 
A rigorous derivation of this non-Markovian Boltzmann equation has been recently obtained in~\cite{nota_two-dimensional_2022}.
We also refer to~\cite{marcozzi_derivation_2016} where linear kinetic equations with a magnetic transport term have been derived, but the non-Markovian behavior of the 
limit process disappears considering a slightly different magnetic Lorentz gas. 

The rigorous derivation of hydrodynamic equations, specifically the heat equation, from the mechanical system given by the random Lorentz gas is  a difficult and still unsolved problem. 
It is worth mentioning that the rigorous validity of such diffusive limit has been obtained in Bunimovich and Sinai (cf.~\cite{bunimovich_statistical_1981}) when the scatterers are periodically distributed. 

Nonetheless, in the random setting, one can handle this problem by deriving the diffusion equation relying on the kinetic
approximation of the microscopic dynamics. More precisely, the heat equation can be obtained from the random Lorentz gas using as a bridge the kinetic equation, which arises in a suitable kinetic limit.  As a consequence, the diffusion coefficient $D$ is given by the Green–Kubo formula associated to the kinetic equation. 
This strategy works once we have an explicit control of the error in the kinetic limit, which
suggests the scale of times for which the diffusive limit can be achieved. This idea has been used to obtain the heat equation in different contexts, see
\cite{bodineau_brownian_2016,basile_derivation_2015,basile_diffusion_2014}, and also~\cite{erdos_quantum_2008}. We also refer to~\cite{lutsko_invariance_2020, lutsko_diffusion_2024} for a different approach in this direction. 

In this paper, we consider the magnetic Lorentz gas and prove that, in the joint low-density and diffusive limit, the heat equation naturally appears, in the same spirit of~\cite{bodineau_brownian_2016,basile_derivation_2015,basile_diffusion_2014}. Here we rely on the generalized Boltzmann equation as a kinetic
approximation of the microscopic dynamics. The main novelty consists in the form of the diffusion coefficient (see~\eqref{eq:GreenKubo}-\eqref{eq:D-Split}), given by the Green-Kubo formula associated to the operator $\mathcal{L}^G$ in~\eqref{eq:LinMagBoltzmann}, which keeps track of the non-Markovian terms induced by the external magnetic field. Recently, also Lutsko and T\'oth~\cite{lutsko_diffusion_2024} considered this problem in a slightly different regime. For a comparison between~\cite{lutsko_diffusion_2024} and our main result, see Remark~\ref{remark:lutsko-toth} above.

\section{Strategy}\label{sec:strategy}

For a given configuration of obstacles $\bm{c}_N$, we recall that $\gamma^{t}_{\bm{c}_{N},\varepsilon}(x,v)$, $t \in \mathbb{R}$ is the (backward) Hamiltonian flow, solution to the Newton equations for a particle with initial state $(x,v)$, in a given sample $\bm{c}_N$ of obstacles of radius $\varepsilon$. We also recall that, for a given initial datum $f_0=f_0(x,v)$, the particle distribution at time $t > 0$ is
\begin{equation}
f^{\varepsilon}(t,x,v):=\mathbb{E}^{\varepsilon}[f_0(\gamma^{-t}_{\bm{c}_{N},\varepsilon}(x,v))\, \mathbbm{1}_{\{\min_i |x-c_i| >\varepsilon\}}]\;.
\end{equation}

\noindent Let  $\bar{f}^{G,\varepsilon}$   be the solution of the following problem
 \begin{equation} \label{eq:etaGBE}
        \begin{cases}
             (\partial_t+  v \cdot \nabla_x -(v \times B) \cdot \nabla_v)\bar{f}^{G,\varepsilon}=\eta_\varepsilon \mathcal{L}^G \bar{f}^{G,\varepsilon}(t,x,v)\\
            \bar{f}^{G,\varepsilon}(0,x,v) =   f_0(x,v),
        \end{cases}
    \end{equation} 
    where $f_0\in C_c(\mathbb{R}^2\times S_1)$ is a probability density and $\mathcal{L}^G$ is as in~\eqref{eq:LinMagBoltzmann}-\eqref{eq:LinMagBoltzmann2}. 
    
\noindent It has been proved in~\cite{nota_two-dimensional_2022} that in the classical Boltzmann--Grad limit, namely when $\mu_\varepsilon \varepsilon=\ell^{-1}=O(1)$,  the Lorentz process converges to the generalized Boltzmann process in the annealed setting and, as a corollary, that 
for any probability density $f_0\in C_c(\mathbb{R}^2\times S_1)$, the particle  distribution $f^\varepsilon(t)$ defined in~\eqref{eq:fep} satisfies
\begin{equation} \label{eq:kinlimBG}
\lim_{\varepsilon\to 0}\| f^\varepsilon(t)-f(t)\|_{L^1(\mathbb{R}^2\times S_1)}=0 ,
\end{equation}
where $f$ is the unique mild solution in $L^1(\mathbb{R}^2\times S_1)$ to~\eqref{eq:LinMagBoltzmann}, or, equivalently to~\eqref{eq:etaGBE} with $\eta_\varepsilon=1$.  
  More precisely, the proof of~\eqref{eq:kinlimBG}  is based on an extension of the original argument proposed in~\cite{gallavotti_divergences_1969}.  The goal is to compare the series solution to~\eqref{eq:LinMagBoltzmann} with~\eqref{eq:fep}. This can be achieved by applying a monotonicity argument for which it is enough to bound~\eqref{eq:fep} from below, and exploiting the mass conservation property. The key ingredient in the approximation is  
a suitable change of
variables, enabling the parametrization of the obstacle centers in terms of scattering times and
scattering vectors and neglecting all configurations that do not lead to the limiting solution
of~\eqref{eq:LinMagBoltzmann} as $\varepsilon\to 0$. The proof proceeds by showing that the flow associated with the particle system for $\varepsilon >0$ 
 is stable with respect to perturbations of its argument and converges to the flow described by~\eqref{eq:LinMagBoltzmann} as $\varepsilon \to 0$. We emphasize that, unlike the case without an external magnetic field, the presence of non-vanishing memory terms in the limit requires careful handling. These terms are geometrically characterized by self-recolliding trajectories, and therefore need additional care in the analysis. \\
 In this paper we build on~\cite{nota_two-dimensional_2022} to study the hydrodynamics of the magnetic Lorentz gas. More precisely, exploiting the fact that in the scaling limit~\eqref{eq:scaling} the gas is still dilute and by arguing as in~\cite{nota_two-dimensional_2022}, we can show that the particle density distribution $f^\varepsilon(t)$ defined in~\eqref{eq:fep} approximates the unique mild solution $\bar{f}^{G,\varepsilon}$  to~\eqref{eq:etaGBE}, namely
\begin{equation}\label{eq:kinlim} 
\lim_{\varepsilon\to 0}\| f^\varepsilon(t)-\bar{f}^{G,\varepsilon}(t)\|_{L^1(\mathbb{R}^2\times S_1)}=0. 
\end{equation} 
Furthermore, in this paper we obtain an explicit estimate for the error in the approximation~\eqref{eq:kinlim}, specifically $\abs{E^\varepsilon(t)} \leq C \varepsilon^{\frac{1}{2}} \eta_\varepsilon^3 t^2$. This will be the content of Section~\ref{sec:errors}, see Proposition~\ref{prop:ErrorEstimates}.
It is possible to show that from~\eqref{eq:etaGBE} in the limit $\eta_\varepsilon \to \infty$ one gets a trivial result (cf. Proposition~\ref{prop:ConvergenceToAverage}). Therefore, to obtain a diffusion equation in the limit, one has to look at the solution on a longer time scale, namely $\eta_\varepsilon t$. Denoting by $h^\varepsilon(t,x,v):=\bar{f}^{G,\varepsilon}(\eta_\varepsilon t,x,v)$ where $\bar{f}^{G,\varepsilon}$ solves~\eqref{eq:etaGBE}, it follows that $h^\varepsilon$ solves
\begin{equation} 
\label{eq:TimeRescaledBoltzmann}
        (\partial_t + \eta_\varepsilon v \cdot \nabla_x - \eta_\varepsilon (v \times B) \cdot \nabla_v)h^\varepsilon(t,x,v) = \eta_\varepsilon^2 \mathcal{L}^G h^\varepsilon(t,x,v)\, .
\end{equation}
We further remark that the explicit error estimate obtained in Proposition~\ref{prop:ErrorEstimates}  on the longer time scale $\eta_\varepsilon t$ becomes 
\begin{equation}\label{eq:errorlongtimes}
\abs{E^\varepsilon(\eta_\varepsilon t)} \leq C \varepsilon^{\frac{1}{2}} \eta_\varepsilon^5 t^2,
\end{equation}
which tends to zero in the limit $\varepsilon\to 0$ under assumption~\eqref{eq:cdteta}.  
Then, using a suitable adaptation of the classical Hilbert expansion,  we will prove in Section~\ref{sec:Proof} that the solution $h^\varepsilon$ to~\eqref{eq:TimeRescaledBoltzmann} converges, as $\varepsilon\to 0$, to the solution $\rho(t,\cdot)$ of the heat equation in~\eqref{eq:HeatEquation}.

\section{Preliminary Estimates: The Kinetic Time Scale}\label{sec:errors}
For the reader's convenience, we repeat again the scaling introduced before: The intensity of the Poisson distribution of obstacles will be such that
\begin{equation}
    \mu_\varepsilon = \varepsilon^{-1} \eta_\varepsilon \mu,
\end{equation}
where $\eta_\varepsilon$ diverges such that $\mu_\varepsilon \varepsilon^2 \to 0$ and $\mu_\varepsilon \varepsilon \to \infty$. Moreover, it is worth mentioning that at this point we leave the time variable invariant and therefore all provided estimates are valid for the kinetic time scale (see Proposition~\ref{prop:ErrorEstimates} and compare with Remark~\ref{rem:kinetic-to-longer}).

The convergence of the test particle's dynamics to the generalized Boltzmann equation was proved in~\cite{nota_two-dimensional_2022} without explicit rate of convergence. In this section we collect estimates on the error terms, which make the convergence rate explicit, as shown in the following proposition. From now on, we assume for notational simplicity that the Lebesgue measure of the region $\Lambda$, in which we confine the dynamics, is $|\Lambda|=1$.

\begin{proposition} \label{prop:ErrorEstimates}
    The particle density function $f^\varepsilon$ is close to  $\bar{f}^{G,\varepsilon}$ in $L^1(\mathbb{R}^2 \times S_1)$ for any $t \in [0,\bar{t}]$. To be more precise 
    \begin{equation}
        \norm{f^\varepsilon(t)- \bar{f}^{G,\varepsilon}(t)}_{L^1(\mathbb{R}^2 \times S_1)}= E^\varepsilon(t),
    \end{equation}
    where
    \begin{equation}\label{eq:errorest}
     \abs{E^\varepsilon(t)} \leq C \varepsilon^{\frac{1}{2}} \eta_\varepsilon^3 t^2
    \end{equation}
    for a constant $C>0$ independent of $t$ and $\bar{f}^{G,\varepsilon} = \bar{f}^{G,\varepsilon}(t,x,v)$ is the unique mild solution of the generalized linear Boltzmann equation in~\eqref{eq:etaGBE} with initial datum $f_0$.
\end{proposition}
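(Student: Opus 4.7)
The plan is to follow the strategy from \cite{nota_two-dimensional_2022} that establishes the convergence~\eqref{eq:kinlim} in the kinetic limit, but now keep explicit track of all error terms so that the final bound is quantitative in $\varepsilon$ and $\eta_\varepsilon$. First I would write the iterated Duhamel expansion of $\bar{f}^{G,\varepsilon}$ as a series over the number $j$ of collisions, parametrized by ordered times $0<t_j<\cdots<t_1<t$ and scattering vectors $n_1,\dots,n_j\in S_1$, keeping track of the extra $e^{-2kT}$ memory weights coming from $\mathcal{M}$. In parallel, I would represent $f^\varepsilon$ by conditioning on the first $N$ obstacles, performing the standard change of variables from obstacle centers to scattering times/vectors along the perturbed trajectory, so that both quantities are written as integrals over the same collision parameter space, differing only through (i) indicator functions that exclude obstacle overlaps and (ii) the difference between the magnetic flow in the actual configuration and the ``ideal'' generalized Boltzmann flow.

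Next, as in Gallavotti's monotonicity argument, I would bound $f^\varepsilon$ from below by restricting to \emph{good} configurations and then use mass conservation (both $f^\varepsilon$ and $\bar{f}^{G,\varepsilon}$ have $L^1$ norm $1$) to turn a one-sided estimate into a two-sided $L^1$ bound. The non-Markovian character forces the notion of ``good'' to allow admissible self-recollisions with the same obstacle after each completed cyclotron orbit (which generate the $e^{-2kT}$ memory terms), while excluding every other pathological event: obstacles whose disks overlap on the trajectory (interferences), recollisions with previously hit obstacles that do \emph{not} come from a full $2\pi$-period of circling, and obstacles that are seen as an obstruction for the limiting flow but missed by the true flow (or vice versa). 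Each such event corresponds to a tube in the $(t_i,n_i)$-parametrization whose volume can be estimated geometrically.

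For the quantitative step, a single pathological ``encounter'' between two collision segments of the trajectory yields a geometric estimate of order $\varepsilon$ for interferences and of order $\varepsilon^{1/2}$ for grazing/tangential recollisions produced by the cyclotron curvature; the latter is the worst contribution and is where the factor $\varepsilon^{1/2}$ in~\eqref{eq:errorest} originates. The intensity of the Poisson process is $\mu_\varepsilon = \varepsilon^{-1}\eta_\varepsilon \mu$, so each additional collision inserted in the expansion costs a factor $\mu_\varepsilon \varepsilon \cdot 1 = \eta_\varepsilon \mu$; combined with the volume of the time simplex, one obtains $C\eta_\varepsilon t$ collisions per unit time on average, and summing the pathological-event estimate over pairs of collision indices produces the prefactor $\eta_\varepsilon^3 t^2$ (two time integrations from the pair of collisions involved, plus one extra $\eta_\varepsilon$ from the a~priori bound on the number of competing intermediate collisions that must be controlled). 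The geometric series over the number of collisions converges for $t\in[0,\bar t\,]$ because the Duhamel series for $\bar{f}^{G,\varepsilon}$ converges absolutely on any bounded time interval, yielding~\eqref{eq:errorest}.

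The main obstacle will be the geometric/combinatorial estimate of recollisions in the presence of the Larmor curvature: unlike the field-free Lorentz gas, a nearly circular trajectory can re-enter the $\varepsilon$-neighborhood of a previously visited disk after a fraction of a period, and one must carefully separate these from the genuine full-period self-recollisions that the limit equation encodes through $\mathcal{M}$. Concretely, I would decompose each collision segment into ``short'' arcs (length $\ll R$), on which the dynamics is essentially straight and classical Boltzmann--Grad estimates apply up to curvature corrections of order $\varepsilon^{1/2}$, and full arcs corresponding to complete cyclotron periods, on which the $e^{-2kT}$ weight exactly matches the probability of an empty annulus~\eqref{eq:Prob Circling}. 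The mismatch between these two pictures is what generates the $\varepsilon^{1/2}$ rate, and bookkeeping these pathological arcs uniformly in the number $k$ of admissible self-recollisions is the most delicate part; all remaining steps are then essentially Gronwall-type summations in the Duhamel series.
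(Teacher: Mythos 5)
Your overall plan — follow \cite{nota_two-dimensional_2022}, track errors explicitly, classify pathological events that break the comparison between the Lorentz flow and the generalized Boltzmann flow, and push the resulting tube-volume estimates through the Poisson/Duhamel expansion — is the same strategy the paper uses, and you correctly anticipate the final $\varepsilon^{1/2}\eta_\varepsilon^3 t^2$ rate. The paper reduces Proposition~\ref{prop:ErrorEstimates} to three separate estimates: periodic daisies (Proposition~\ref{prop:Daisies}), recollisions (Lemma~\ref{lem:Recollisions}), and interferences (Lemma~\ref{lem:Interferences}).

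However, there is a genuine gap in the mechanism you give for the $\varepsilon^{1/2}$. You attribute it to ``curvature corrections of order $\varepsilon^{1/2}$'' from the cyclotron arcs and claim that interferences contribute only $O(\varepsilon)$. Neither is what happens. In the paper, both the recollision and the interference estimates give $\varepsilon^{1/2}\eta_\varepsilon^3 t^2$, and in both cases the $1/2$ arises from an optimized dichotomy, not from curvature. For recollisions one splits according to whether the relative scattering angle $\alpha_{j,k}$ satisfies $\sin\alpha_{j,k}\lessgtr\varepsilon^\nu/4$: in the nearly grazing regime the pathological obstacle is confined to a lune of area $O(\varepsilon^{\nu+1})$, while in the non-grazing regime two consecutive obstacle centers must lie within $O(\varepsilon^{1-\nu})$ of each other, confining one of them to a region of area $O(\varepsilon^{2-\nu})$; balancing gives $\nu=1/2$. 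For interferences one splits according to $d(\theta_{i+1}+\varphi_{i+1}+\cdots+\theta_{j-1}+\varphi_{j-1},\pi\mathbb{Z}^*)\lessgtr\varepsilon^\delta$ and optimizes $\delta=1/2$; the ``aligned'' case is handled by bounding the scattering cross section $\Phi=\frac{\varepsilon}{2}|\sin(\theta/2)|\le\varepsilon/2$ and integrating the indicator of the small angular window. This threshold optimization is the same device already used in the field-free Lorentz gas (e.g.\ \cite{basile_derivation_2015, desvillettes_linear_1999}); the magnetic field complicates the geometry of the excluded regions but does not by itself generate the $\varepsilon^{1/2}$. Your reasoning as stated would not reproduce the bound without reintroducing this dichotomy. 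You also omit the periodic daisies — the event that $\beta^\varepsilon_i/2\pi\in\mathbb{Q}$, so the particle is trapped forever in a regular polygon of Larmor arcs; this needs a separate estimate (it gives the better rate $Ct\varepsilon\eta_\varepsilon$ because the confinement of the obstacle center to a disk of radius $\varepsilon$ produces an area $\varepsilon^2$) and cannot be absorbed into the recollision/interference bookkeeping.
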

\begin{remark} \label{rem:kinetic-to-longer}
    Proposition~\ref{prop:ErrorEstimates} holds true for longer time scales. In particular, for $t \to \eta_\varepsilon t$ the error term $E^\varepsilon$ vanishes if $\eta_\varepsilon$ is chosen such that $ \varepsilon^{\frac{1}{2}} \eta_\varepsilon^5 \to 0$ as $\varepsilon \to 0$.
\end{remark}
\begin{remark}
   The property of the Poisson distribution with intensity $\mu_\varepsilon = \varepsilon^{-1} \eta_\varepsilon \mu$ guarantees 
    \begin{equation}
        \mathbb{E}^\varepsilon \left[\mathbbm{1}_{\left\{ \min_i\abs{x-c_i}\leq \varepsilon \right\} }\right] \leq C \varepsilon \eta_\varepsilon,
    \end{equation}
    which implies that configurations in which the tagged particle starts inside (or at the boundary of) an obstacle are negligible in the limit $\varepsilon \to 0$. We omit the proof and refer to~\cite{desvillettes_linear_1999} Lemma~3.1 for the detailed calculation.
\end{remark}
\begin{remark}
    The estimates in order to prove Proposition~\ref{prop:ErrorEstimates} are performed without considering the initial probability density $f_0$. Indeed, including $f_0$ would not change how $E^\varepsilon$ depends on $\varepsilon$, since $f_0$ is bounded and its $L^\infty$-norm would be absorbed into the constant $C>0$.
\end{remark}
\noindent The proof of Proposition~\ref{prop:ErrorEstimates} follows by combining the explicit estimates given in Proposition~\ref{prop:Daisies}, Lemma~\ref{lem:Recollisions} and~\ref{lem:Interferences} as well as the result in~\cite{nota_two-dimensional_2022}.

 However, before starting with the estimates, we want to introduce some terminology and fix the notation. We start with
\begin{definition}[The Magnetic Lorentz/Particle Process]
    For a given $\varepsilon > 0$, initial datum $(x_0,v_0) \in \mathbb{R}^2 \times S_1$ as well as configuration of obstacles $\bm{c}_N$, where $\min_i \abs{c_i - x_0} > 0$, we call the test particle's trajectory a realisation of the magnetic Lorentz (or particle) process starting from $(x_0,v_0)$. We will denote the magnetic Lorentz process by the pair
    \begin{equation}
        \left(\xi^\varepsilon(s),\zeta^\varepsilon(s)\right) := \gamma^{s}_{\bm{c}_{N},\varepsilon}(x_0,v_0), \qquad s \in \mathbb{R},
    \end{equation}
    where the Hamiltonian flow $\gamma^{s}_{\bm{c}_{N},\varepsilon}$ is defined almost surely on the phase space given by the following set
    \begin{equation}
        \mathcal{S}^\varepsilon_{\bm{c}_N} := \left \{ (x,v)\in \mathbb{R}^2 \times S_1 : \min_i \abs{x- c_i} > \varepsilon \right \}.
    \end{equation}
\end{definition}
It is obvious that not every obstacle is hit by the light particle (see Fig.~\ref{fig:PossibleTraj} for an example). This leads us to 
\begin{definition}[Internal and External Obstacles]
    We call an obstacle centered at $c_i$ internal if it is hit by the test particle:
    \begin{equation}
        \inf_{t \geq 0} \abs{\xi^\varepsilon (t) - c_i} = \varepsilon.
    \end{equation}
    The obstacles centered at $c_i$ are external if the trajectory keeps distance larger than $\varepsilon$. More precisely,
    \begin{equation}
        \inf_{t \geq 0} \abs{\xi^\varepsilon(t)- c_i} > \varepsilon.
    \end{equation}
\end{definition}
Given a collection of internal obstacles, it is possible to describe their locations $\{c_i\}$ in terms of the associated scattering data. We introduce 
\begin{definition}[Impact Times and Impact Vectors] \label{def:ScatteringData}
    Given an internal obstacle centered at $c_i$. The impact (or scattering) time $t_i$ is defined as 
    \begin{equation}
        t_i := \sup \left \{s > 0 : \inf_{0 \leq t \leq s} \abs{\xi^\varepsilon(t) - c_i} > \varepsilon \right \}.
    \end{equation}
    In analogy to $t_i$, we call the quantity $ \tau_i$ satisfying
    \begin{equation}
        \tau_i := \inf \left \{s > 0 : \inf_{0 \leq t \leq s} \abs{\xi^\varepsilon(t) - c_i} > \varepsilon \right \}
    \end{equation}
    the exit time. The impact vector $n_i$ of an internal obstacle is given by
    \begin{equation}
        n_i := \frac{\xi^\varepsilon(t_i)- c_i}{\varepsilon} \in S_1.
    \end{equation}
    For a given impact vector $n_i$ and velocity $\zeta^\varepsilon(t_i^+)$, we consider the formed angle $\phi_i$ by $n_i$ and $-\zeta^\varepsilon(t_i^+)$. The so-called impact parameter $b_i$ is defined as
    \begin{equation}
        b_i := \varepsilon \sin \phi_i.
    \end{equation}
    We say that the $n \geq 1$ internal obstacles are ordered if their associated impact times satisfy $t > t_1 > t_2 > ... > t_n > 0$.
\end{definition}

\noindent For the initial datum $(x_0,v_0) \in \mathbb{R}^2 \times S_1$ at $t>0$, we define the generalized Boltzmann process without collisions by 
\begin{equation}
    \left(\xi_0(s),\zeta_0(s)\right) =\gamma^{-t+s}_0(x_0,v_0), \qquad s \in \mathbb{R},
\end{equation}
where $\gamma^{-t+s}_0$ is the Hamiltonian flow solving the following equations of motion
\begin{equation}
    \begin{cases}
        \frac{\dd}{\dd s}\xi_0(s) = \zeta_0(s)\\
        \frac{\dd}{\dd s}\zeta_0(s) = - \zeta_0^\perp(s) \abs{B}\\
        \left(\xi_0(0),\zeta_0(0)\right) = (x_0,v_0).
    \end{cases}
\end{equation}
This describes the motion of a test particle without encountering a collision. The collisions are taken into account by introducing the subsequent branches of the process associated to the equations above. 
\begin{definition}[The Generalized Boltzmann Process]
    Let $(x_0,v_0)\in \mathbb{R}^2 \times S_1$ be the initial configuration for $t>0$. For $m\geq 1$ collisions, we define the generalized Boltzmann process
    \begin{equation}
\left(\xi_m(s),\zeta_m(s)\right) = \gamma^{-t+s}_m(x_0,v_0), \qquad s \in \mathbb{R}
    \end{equation}
    iteratively through:
\begin{enumerate}
        \item Choose $t_1$ such that $0 \leq t -  t_1 \leq T$. The test particle's trajectory $\left(\xi_m(s),\zeta_m(s)\right)$ for $s \in [t_1,t]$ is given by applying $\gamma_0$ until $t_1$ to $(x_0,v_0)$.
        \item Let $m \geq i \geq 1$. We consider $\zeta_m(t_i^+)$ and pick $n_i \in S_1$ such that
        \begin{equation}
            \zeta_m(t_i^+) \cdot n_i \geq 0.
        \end{equation}
        We flip the velocity $\zeta_m(t_i^+)$ by applying to it the the scattering map $S^{(1)}_{n_i}$. This defines the pre-collisional velocity $\zeta_m(t_i^-)$ or equivalently a rotation angle $\vartheta_i$. We go to the next step.
        \item Choose $t_{i+1}$ in such a way that $(t_{i+1}-t_i) (s-t) \geq 0$ and define
        \begin{equation}
            k_i := \left[(t_{i} - t_{i+1})/T\right]. 
        \end{equation}
        If $k_i = 0$, go directly to the next step; otherwise apply $\gamma_0$ until the first completion of the Larmor orbit (i.e. up to $t_i-T$). Then, rotate the velocity by the angle $\vartheta_i$. Repeat this procedure $k_i$-times until the time $t_i- k_i T$ is reached. Go to the next step.
        \item Apply $\gamma_0$ from $t_i - k_i T$ up to time $t_{i+1}$. Return to the second step.
    \end{enumerate}
    If $\zeta_m(t_i^+) \cdot n_i < 0$, we set the rotation angle $\vartheta_i$ to zero and do not flip the velocity. Together with $m=0$, $\left(\xi_m(s),\zeta_m(s)\right)$ forms the generalized Boltzmann process (defined backwards in time).
\end{definition}
We now introduce all possible events that might appear as a realisation of the magnetic Lorentz process:
\begin{definition}[(Self-)Recollisions] \label{def:(Self)Recollisions}
    Collisions for times different from $t_i$ with the obstacle centered at $c_i$ are called recollisions. A recollision with the $i$-th obstacle at $\Tilde{t}> t_i$ is called a self-recollision if the previously hit scatterer is labelled by $i$. Let $Q \in \mathbb{N}$ denote the internal obstacles, then the set describing the event of a recollision (disregarding self-recollisions) is given by
    \begin{equation}\label{eq:pathR}
        R := \left\{\min_{i = 1,...,Q} \min_{j = i+2,...,Q} \inf_{\tau_j \leq s \leq t_{j+1}} \abs{\xi_m(s)- c_i} \leq \varepsilon \right\}.
    \end{equation}
\end{definition}

\begin{definition}[Daisies]
    A self-recollision at $\Tilde{t}_i> t_i$ with the $i$-th obstacle happens when the test particle circles around the scatterer. The pieces of the cyclotron orbits will be called leaves and together with the obstacle we refer to the event as a daisy. We denote by $2 \beta^\varepsilon_i$ the angle, which is formed by the lines connecting $c_i$ with the two distinct points of impact $\xi^\varepsilon(t_i)$ and $\xi^\varepsilon(\Tilde{t}_i)$ at the obstacle's surface. Then, simple geometrical considerations lead to
\begin{equation} \label{eq: Self Recollisions}
    \cos \beta^\varepsilon_i = \frac{(\Delta_i^\varepsilon)^2 - R^2 + \varepsilon^2}{2 \Delta_i^\varepsilon \varepsilon},
\end{equation}
where $\Delta_i^\varepsilon \in (R-\varepsilon,R+\varepsilon)$.
\end{definition}
\begin{definition}[Periodic Daisies] \label{def:PeriodicDaisies}
    A periodic daisy is the result of $\beta^\varepsilon_i$ in~\eqref{eq: Self Recollisions} being a rational multiple of $2 \pi$. In that case, the test particle is trapped for ever at the obstacle $i$. We denote the event of a periodic daisy by 
    \begin{equation}\label{eq:pathPD}
        A := \bigcup_{m = 2}^K A_m^i,
\end{equation} 
where 
\begin{equation}\label{eq:pathPD1}
    A_m^i := \left\{\exists c_i: \frac{\beta_i^\varepsilon}{2 \pi} \in \mathbb{Q} \right\}.
\end{equation}
The set $ A_m^i$ represents the event that an obstacle centered at $c_i$ is surrounded by exactly one $m$-daisy. We would like to emphasize that the maximum number of leaves $K \geq 2$ is finite. This reflects that $\abs{B} < \infty$ (see also Remark~\ref{rem:B-infty}).
\end{definition}

\begin{definition}[Interferences] \label{def:Interferences} Given a collection of internal obstacles $Q \in \mathbb{N}$, then the set 
    \begin{equation} \label{eq:EventInterferences}
    I  = \left \{ \min_{i = 0,...,Q-1} \min_{j = i+2,...,Q} \inf_{\tau_i \leq s \leq t_{i+1}} \abs{\xi_m(s)- c_j} \leq \varepsilon \right\}
\end{equation}
represents the insertion of the $i$-th obstacle into the tube formed by the test particle before the first and unique impact time $t_i$. We will call the event described by $I$ interferences.
\end{definition}

\subsection{Periodic Daisies}
Let us consider a self-recollision of the test particle with the same obstacle $i$. We denote by $\{q_m\}$ the sequence of the centers of Lamor orbits.  Given an angle of incidence $\psi$, the test particle is reflected elastically and shifts the center of the cyclotron orbit to $q_1$ without changing the Lamor radius $R$. After a certain time, the test particle recollides with the obstacle $i$.
As already mentioned in Definition~\ref{def:PeriodicDaisies}, it can happen that the ratio between $\beta^\varepsilon_i$ and $2\pi$ is rational. In that case, the test particle forms a periodic $m$-daisy given by the centers of Lamor orbits $\{q_m\}$, where $2\leq m \leq K$ with $K\geq 2$ being a finite natural number.
The geometrical interpretation lets us conclude that a periodic $m$-daisy spans a regular polygon with $m$ corners labelled by $q_m$. An illustration of a periodic $3$-daisy can be found in Figure~\ref{fig:periodic3daisy}. 

\noindent We show that the periodic daisies vanish as $\varepsilon \to 0$ by proving

\begin{proposition} \label{prop:Daisies}
    Let $A$ be the event defined as in \eqref{eq:pathPD}-\eqref{eq:pathPD1}, namely $A$ is the event that a periodic daisy occurs. For any finite time $t$ and magnetic field, we have 
    \begin{equation}
        \mathbb{E}^\varepsilon [\mathbbm{ 1}_A] \leq C t \varepsilon \eta_\varepsilon,
    \end{equation}
    where $C>0$ independent of $t$.
\end{proposition}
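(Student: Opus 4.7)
The plan is to exploit the finiteness of the admissible periodic configurations, together with a Poisson counting argument for the positions of the obstacles, to bound the probability by the product of the expected number of collisions and a per-collision geometric factor of order $\varepsilon$. Since $K$ is a finite constant depending only on $|B|$, the admissible angles producing a periodic $m$-daisy form a finite set $\{\pi k/m : 2 \leq m \leq K,\, \gcd(k,m) = 1\}$ of cardinality at most $O(K^2)$; via~\eqref{eq: Self Recollisions}, each such target angle $\beta^{*}$ determines a specific distance $\Delta^{*} \in (R-\varepsilon, R+\varepsilon)$ from the obstacle center to the Larmor center at the corresponding collision. The set of obstacle positions producing a periodic daisy at a given Larmor center $q$ is therefore a finite union of circles around $q$.

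Since the exact rationality condition defining $A$ is measure-zero in the continuum, the substantive step is to include $A$ in the coarser event $\widetilde{A}$ requiring $\beta_i^\varepsilon$ to lie within $O(\varepsilon)$ of some admissible $\beta^{*}$; the inclusion $A \subseteq \widetilde{A}$ is immediate, and the $O(\varepsilon)$ tolerance is the natural resolution imposed by the finite obstacle radius. In the kinetic scaling the collision rate is $\mu_\varepsilon \varepsilon = \eta_\varepsilon$, so the expected number of collisions in $[0,t]$ is $O(t\eta_\varepsilon)$; at each collision the Poisson distribution of obstacle centers in the Larmor annulus, composed with~\eqref{eq: Self Recollisions}, induces an absolutely continuous distribution of $\beta_i^\varepsilon$ with bounded density. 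The per-collision probability of falling in the union of $O(K^2)$ intervals of length $O(\varepsilon)$ is then $O(\varepsilon)$, and a union bound over collisions yields
\[
\mathbb{E}^\varepsilon[\mathbbm{1}_A] \leq \mathbb{E}^\varepsilon[\mathbbm{1}_{\widetilde{A}}] \leq C\, t\, \varepsilon\, \eta_\varepsilon,
\]
as claimed.

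The main obstacle is to control the density of $\beta_i^\varepsilon$ uniformly: from~\eqref{eq: Self Recollisions} the Jacobian $|d\Delta/d\beta| \sim \varepsilon \sin\beta$ degenerates as $\sin\beta \to 0$, i.e.\ near the boundary $\Delta \to R \pm \varepsilon$ of the Larmor annulus. A careful argument is needed either to show that the density stays uniformly bounded after averaging against the Poisson law on obstacle positions, or to show that the degenerate tangential regime contributes negligibly to the expected number of near-periodic collisions, so that the $O(\varepsilon)$ per-collision estimate survives.
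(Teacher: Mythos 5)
Your proposal follows the same high-level geometric idea as the paper (the constraint that a single obstacle must occupy an $O(\varepsilon^2)$-size region to produce a daisy), but packages the estimate differently. The paper argues at a single instant: it introduces the event $A_0$ of being initially trapped, observes that $c_1$ must lie in a disk of radius $\varepsilon$, obtains $\mathbb{E}^\varepsilon[\mathbbm{1}_{A_0}] \le \pi \varepsilon^2 \mu_\varepsilon (K-1) = \pi \varepsilon \eta_\varepsilon \mu (K-1)$ from the Poisson law, and then appeals to time-monotonicity (``as time increases it becomes more likely to be trapped'') to inject the factor $t$ and close the bound. You instead amortize over the trajectory: you bound $A$ by a coarser, $\varepsilon$-thickened event $\widetilde{A}$, assign a per-collision probability $O(\varepsilon)$ of landing near one of the finitely many admissible $\beta^{*}$, and multiply by the $O(t\eta_\varepsilon)$ expected collisions. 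Both give $O(t\varepsilon\eta_\varepsilon)$, but your ``expected collisions $\times$ per-collision probability'' accounting is arguably more transparent than the paper's terse supremum step, and you make the necessary thickening (the literal rationality event is null) explicit, whereas the paper leaves it implicit in the assertion $\int \mathbbm{1}_{A_m^1}\,\dd c_1 \le \pi\varepsilon^2$.

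On your flagged concern: the worry about the degenerating Jacobian $|\dd\Delta/\dd\beta| \sim \varepsilon\sin\beta$ is resolved by the very finiteness of $K$ that you already invoke. The degeneracy occurs only as $\beta \to 0$ or $\beta \to \pi$, i.e.\ as $\Delta \to R \pm \varepsilon$, which corresponds to daisies with an unbounded number of leaves. Since the admissible periodic angles $\beta^{*}$ for $m$-daisies with $2 \le m \le K$ form a finite set bounded away from $\{0,\pi\}$, one has $\sin\beta^{*} \ge c_K > 0$ uniformly, so the pushforward density of $\beta$ on the relevant $O(\varepsilon)$-neighborhoods of the targets is bounded. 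You should note this explicitly rather than leaving it as an open concern; with that observation the per-collision $O(\varepsilon)$ bound holds uniformly and your argument closes. The paper itself does not address the Jacobian at all, so your explicit identification of the potential issue (and its resolution via $K<\infty$) is a genuine improvement in rigor.
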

\begin{proof}
We start by observing again that $2 \leq K < \infty$, otherwise $\beta_i^\varepsilon$ in~\eqref{eq: Self Recollisions} can be made arbitrarily small and the light particle would not circle around the obstacle. Let $A_0$ denote the event that the test particle is initially trapped in a periodic $m$-daisy.
Therefore, 
\begin{equation}
    \begin{aligned}
        \mathbb{E}^\varepsilon [\mathbbm{1}_{A_0}] &= \sum_{N \geq 0} \sum_{m = 2}^K e^{-\mu_\varepsilon} \int \frac{\mu_\varepsilon^N}{N!}  \mathbbm{1}_{A_m^i} \dd \bm{c}_N\\
        & \leq \sum_{N\geq 0 } \sum_{m = 2}^K e^{-\mu_\varepsilon} \int \frac{\mu_\varepsilon^N}{N!} N  \mathbbm{1}_{A_m^1} \dd {c}_1\\
        & \leq \sum_{N\geq 0 } e^{-\mu_\varepsilon} \frac{\mu_\varepsilon^N}{N!} N \sum_{m = 2}^K \pi \varepsilon^2 \\
        & \leq \pi \varepsilon^2 \mu_\varepsilon (K-1)\\
        & \leq \pi \varepsilon \eta_\varepsilon \mu (K-1).
    \end{aligned}
\end{equation}
In order to obtain the second inequality, we noted that $c_1$ must be located inside a disk of radius $\varepsilon$; otherwise the configuration would not lead to a periodic $m$-daisy. As time increases, it becomes more likely for the test particle to be trapped. We conclude that
\begin{equation}
    \mathbb{P}^\varepsilon(A)  \leq \sup_{s \in [0,t]} \mathbb{E}^\varepsilon [\mathbbm{1}_{A_0}] \leq C t \varepsilon \eta_\varepsilon.
\end{equation}
\end{proof}
\begin{remark}\label{rem:B-infty}
    Stating that the number of leaves $K$ must be finite is equivalent to assert that, as 
$\abs{B}\to\infty$, the system is expected to exhibit no dynamics.
\end{remark}
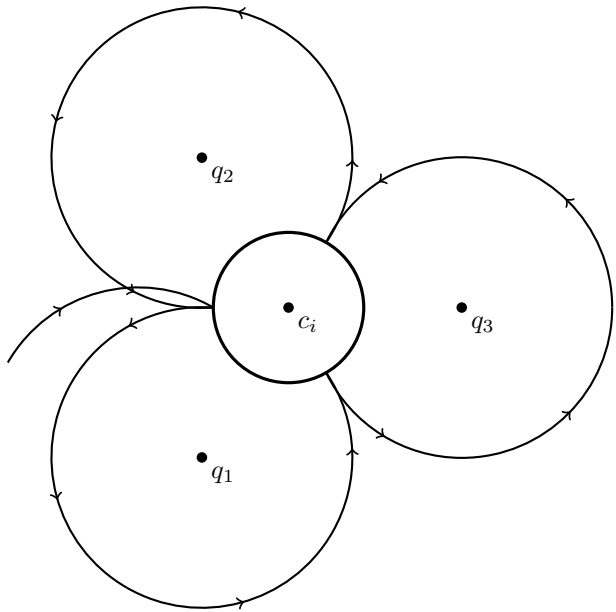
\begin{figure}
    \centering
\begin{tikzpicture}

\def \p{asin(sqrt(3)/4)};
\def \r{2};
\def \d{sqrt(\r^2- (sqrt(3)/2)^2)};

\draw[very thick] (0,0) circle (1);
 
\draw[thick,postaction={decorate, decoration={markings, mark=between positions 0.3 and 1.0 step 0.5 with {\arrow[]{>};}}}] ({\d+1/2},0) ++(2,0) arc[start angle=0, end angle={180-\p}, radius=\r];
\draw[thick,postaction={decorate, decoration={markings,mark=between positions 0.3 and 1.0 step 0.5 with {\arrow[]{<};}}}] ({\d+1/2},0) ++(2,0) arc[start angle=0, end angle={-(180-\p)}, radius=\r];
 
\draw[thick,rotate=120,postaction={decorate, decoration={markings,mark=between positions 0.3 and 1.0 step 0.5 with {\arrow[]{>};}}}] ({\d+1/2},0) ++(2,0) arc[start angle=0, end angle={180-\p}, radius=\r];
\draw[thick,rotate=120,postaction={decorate, decoration={markings,mark=between positions 0.3 and 1.0 step 0.5 with {\arrow[]{<};}}}] ({\d+1/2},0) ++(2,0) arc[start angle=0, end angle={-(180-\p)}, radius=\r];
 
\draw[thick,rotate=240,postaction={decorate, decoration={markings,mark=between positions 0.3 and 1.0 step 0.5 with {\arrow[]{>};}}}] ({\d+1/2},0) ++(2,0) arc[start angle=0, end angle={180-\p}, radius=\r];
\draw[thick,rotate=240,postaction={decorate, decoration={markings, mark=between positions 0.3 and 1.0 step 0.5 with {\arrow[]{<};}}}] ({\d+1/2},0) ++(2,0) arc[start angle=0, end angle={-(180-\p)}, radius=\r];

\draw[thick,postaction={decorate, decoration={markings, mark=between positions 0.7 and 1 step 0.9 with {\arrow[]{<};}}}] (-1,0)  arc[start angle=60, end angle=150, radius=\r];

        \coordinate[label=below right:{$q_{2}$}] (Q1) at ({cos(120)*(\d+1/2)},{sin(120)*(\d+1/2)});
        \coordinate[label=below right:{$q_{1}$}] (Q2) at ({cos(240)*(\d+1/2)},{sin(240)*(\d+1/2)});
        \coordinate[label=below right:{$q_{3}$}] (Q3) at ({\d+1/2},0);
        \coordinate[label=below right:{$c_{i}$}] (C) at (0,0);
        {\fill[] (C) circle (0.07);}
        {\fill[] (Q3) circle (0.07);}
        {\fill[] (Q2) circle (0.07);}
        {\fill[] (Q1) circle (0.07);}

\end{tikzpicture}
\caption{Illustration of a periodic $3$-daisy.}
\label{fig:periodic3daisy}
\end{figure}

\subsection{Recollisions}
Another event which spoils the Markovianity of the underlying process (see Remarks after Theorem~\ref{thm:MainTheorem}) are recollisions. Namely, the test particle collides again with a scatterer previously encountered.   We distinguish between two distinct kind of recollisions:  \textit{self-recollisions} and \textit{external recollisions} (see Definition~\ref{def:(Self)Recollisions}). 
While self-recollisions survive in the low-density limit, external recollisions are negligible as $\varepsilon\to 0$, as shown in the following estimate:
\begin{lemma} \label{lem:Recollisions}
    Let $R$ be the  event defined as in \eqref{eq:pathR}. For any finite time $t$ and magnetic field $B$, there exists $\varepsilon < \varepsilon_0$, such that the following estimate holds     \begin{equation}
        \mathbb{E}^\varepsilon [\mathbbm{ 1}_R] \leq C t^2 \varepsilon^{\frac{1}{2}} \eta_\varepsilon^3.
    \end{equation}
    The constant $C > 0$ is independent of $t$ and $B$.
\end{lemma}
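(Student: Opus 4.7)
The plan is to reduce the bound on $\mathbb{E}^\varepsilon[\mathbbm{1}_R]$ to a grazing-angle estimate via the standard scattering change of variables used in the kinetic derivation~\cite{nota_two-dimensional_2022}. For each fixed number $Q\ge 2$ of internal obstacles and each pair of indices with $1\le i\le j-2\le Q-2$, I replace the Poisson integration over centers $(c_1,\ldots,c_Q)$ by an integration over impact times and impact parameters $(t_k,b_k)_{k=1}^Q$ along the virtual magnetic Boltzmann trajectory; using Definition~\ref{def:ScatteringData}, each $\dd c_k$ contributes $\varepsilon\,\dd t_k\,\dd b_k$, so that the Poisson weight $\mu_\varepsilon^Q/Q!=(\varepsilon^{-1}\eta_\varepsilon\mu)^Q/Q!$ combines with the Jacobians to leave the effective weight $(\mu\eta_\varepsilon)^Q/Q!$ on the ordered time simplex.

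After this reduction, the recollision event $R$ decomposes into the constraint that the arc of the trajectory between the exit time $\tau_j$ and the next impact time $t_{j+1}$ passes within $\varepsilon$ of the reconstructed center $c_i=c_i(t_i,b_i)$. The key step is to show that, fixing all scattering data except $b_i$, the Lebesgue measure of the set of $b_i\in[-\varepsilon,\varepsilon]$ for which this constraint holds is $O(\sqrt{\varepsilon})$. This is the magnetic analog of the Boldrighini--Bunimovich--Sinai grazing estimate, with straight segments replaced by circular arcs of Larmor radius $R=1/|B|$: the position along the arc near $c_{j+1}$ depends smoothly on $b_i$ through the reflection at $c_i$ composed with subsequent Larmor rotations, and transversality fails only in a grazing neighborhood whose width is of order $\sqrt{\varepsilon}$. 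The hypothesis $j\ge i+2$ is exactly what excludes the case $j=i+1$ of self-recollisions, whose contribution instead survives in the limit and is responsible for the memory operator~\eqref{eq:M-operator}.

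Summing the $O(\sqrt{\varepsilon})$ bound per pair over the $O(Q^2)$ choices of $(i,j)$ and integrating over the time simplex gives a bound of the form
\begin{equation*}
\mathbb{E}^\varepsilon[\mathbbm{1}_R]\;\le\; C\sqrt{\varepsilon}\sum_{Q\ge 2}\frac{(\mu\eta_\varepsilon t)^Q}{Q!}\,Q^2\, \cdot\, W_\varepsilon,
\end{equation*}
where the additional factor $W_\varepsilon=O(\eta_\varepsilon)$ accounts for the fact that on the kinetic time scale the arc between $\tau_j$ and $t_{j+1}$ may perform up to $O(\eta_\varepsilon)$ full Larmor windings (recall $T=2\pi\eta_\varepsilon^{-1}/|B|$ via~\eqref{eq:scaled-Larmor}), each of which provides an independent grazing opportunity with the fixed $c_i$. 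The possibility of accumulation along resonant windings is ruled out because the periodic configurations have already been excised by Proposition~\ref{prop:Daisies}. Combining these contributions yields the claimed bound $Ct^2\varepsilon^{1/2}\eta_\varepsilon^3$.

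The main obstacle is establishing the magnetic grazing estimate uniformly in the intermediate scattering data and in $|B|$ bounded away from infinity: one must show that the $b_i$-derivative of the position map along the arc is nondegenerate except on a set of $b_i$ of measure $O(\sqrt{\varepsilon})$, which requires a careful case analysis of the arc geometry relative to the various obstacle centers and a use of the absence of periodic daisies. Once this grazing estimate is proved, the summation in $Q$ is a routine manipulation of the exponential series.
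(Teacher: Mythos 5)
Your proposal takes a genuinely different route from the paper, but it contains a gap at the central step and an error in the power counting.

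\textbf{What the paper does.} After passing to scattering variables $(t_k,b_k)$, the paper decomposes $\mathbb{E}^\varepsilon[\mathbbm{1}_R]$ according to the size of the relative scattering angles: on the set where $\sin\alpha_{j,k}\le\varepsilon^{\nu}/4$ for all intermediate $k$, a recollision forces one intermediate obstacle center into a lune of area $O(\varepsilon^{\nu+1})$; on the complementary set, the centers $c_i$ and $c_{i+1}$ are forced within distance $O(\varepsilon^{1-\nu})$ of each other, hence one center lies in a region of area $O(\varepsilon^{2-\nu})$. Each branch involves a \emph{triple} index sum $\sum_i\sum_j\sum_k$, producing $Q(Q-1)(Q-2)$ and hence $(\mu_\varepsilon\varepsilon)^3 = (\mu\eta_\varepsilon)^3$ after summing the Poisson weight against the time simplex and the exponential prefactor. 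Optimizing $\nu=1/2$ balances the two branches and yields $C t^2\varepsilon^{1/2}\eta_\varepsilon^3$.

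\textbf{Where your proposal breaks.} You replace this two-branch optimization by a single ``magnetic grazing estimate'': that for fixed scattering data except $b_i$, the set of $b_i\in[-\varepsilon,\varepsilon]$ producing a recollision has measure $O(\sqrt\varepsilon)$. This is not proved, and it is not what happens: in the Boldrighini--Bunimovich--Sinai scheme (which the paper follows) the rate $\varepsilon^{1/2}$ emerges precisely from balancing two \emph{complementary} estimates (small-angle versus large-angle), not from a one-shot measure estimate on a single impact parameter. Nothing in the magnetic geometry changes this: transversality of the return arc fails not only in a thin $b_i$-window but also along near-collinear chains of collisions, and it is exactly the second scenario that your grazing lemma cannot see.

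\textbf{The $\eta_\varepsilon$ bookkeeping.} With your effective weight $(\mu\eta_\varepsilon)^Q/Q!$, a $Q^2$ index count, and a $\sqrt\varepsilon$ measure factor, you only reach $\sqrt\varepsilon\,t^2\eta_\varepsilon^2$ (after restoring the omitted Poisson prefactor $e^{-2t\mu_\varepsilon\varepsilon}$, without which the $Q$-sum diverges). To reach $\eta_\varepsilon^3$ you introduce $W_\varepsilon=O(\eta_\varepsilon)$ via ``$O(\eta_\varepsilon)$ Larmor windings per free flight''. But the mean free time and the rescaled Larmor period $T=\eta_\varepsilon^{-1}\widetilde T$ are both of order $\eta_\varepsilon^{-1}$, so the typical number of windings per flight is $O(1)$, not $O(\eta_\varepsilon)$; long flights with many windings are already exponentially suppressed by the Poisson statistics and do not contribute an extra power. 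The third power of $\eta_\varepsilon$ in the paper comes from a genuine \emph{triple} sum over obstacle indices (a third obstacle is constrained geometrically in each branch), not from a winding-number union bound. As it stands, the $W_\varepsilon$ factor is an ad hoc patch rather than a derived bound, and this is the step that would fail under scrutiny.

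To repair the argument along the paper's lines, replace the grazing lemma by the two-case decomposition in the angle variables and carry three constrained obstacle indices through the Poisson sum; the $\varepsilon^{1/2}$ then comes from the optimization, and the $\eta_\varepsilon^3$ from the index count, with no need for a winding factor.
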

\begin{proof}
Let $Q \in \mathbb{N}$ label the internal obstacles. 
We want to estimate the probability of $R$ during the dynamics.
We call $\alpha_{j,k}$ the relative scattering angle formed by the test particle's trajectory when leaving obstacle $j$ and hitting $k$ for the first time (see Definition~\ref{def:ScatteringData}). Therefore,
\begin{equation}
    \mathbb{E}^\varepsilon[\mathbbm{1}_R] = I^1_\varepsilon + I^2_\varepsilon,
\end{equation}
where
\begin{equation} \label{eq:Int1Recollision}
    \begin{aligned}
        I^1_\varepsilon =& e^{-2t \mu_\varepsilon \varepsilon} \sum_{Q\geq 0} \mu_\varepsilon^Q \sum_{i=1}^Q \sum_{j= i+2}^Q \int_0^t \dd t_1 \int_{t_1}^t \dd t_2 ... \int_{t_{Q-1}}^t \dd t_Q \\
        & \times\int_{-\varepsilon}^\varepsilon \dd b_1 \int_{-\varepsilon}^\varepsilon \dd b_2 ... \int_{-\varepsilon}^\varepsilon \dd b_Q \mathbbm{1}_{\left\{\inf_{\tau_j \leq s \leq t_{j+1}} \abs{\xi_m(s)- c_i} \leq \varepsilon\right\}}\\
        & \times \mathbbm{1}_{\left\{\sin \alpha_{j,k} \leq \frac{\varepsilon^\nu}{4},\, \forall k=i,...,j-1\right\}} 
    \end{aligned}
\end{equation}
and
\begin{equation} \label{eq:Int2Recollision}
    \begin{aligned}
        I^2_\varepsilon =& e^{-2t \mu_\varepsilon \varepsilon} \sum_{Q\geq 0} \mu_\varepsilon^Q \sum_{i=1}^Q \sum_{j= i+2}^Q \int_0^t \dd t_1 \int_{t_1}^t \dd t_2 ... \int_{t_{Q-1}}^t \dd t_Q \\
        & \times\int_{-\varepsilon}^\varepsilon \dd b_1 \int_{-\varepsilon}^\varepsilon \dd b_2 ... \int_{-\varepsilon}^\varepsilon \dd b_Q \mathbbm{1}_{\left\{\inf_{\tau_j \leq s \leq t_{j+1}} \abs{\xi_m(s)- c_i} \leq \varepsilon\right\}}\\
        & \times \mathbbm{1}_{\left\{\sin \alpha_{j,k} \geq \frac{\varepsilon^\nu}{4},\, \text{for some } k=i,...,j-1\right\}} .
    \end{aligned}
\end{equation}
The parameter $\nu \in (0,1)$ will be determined at the end of the calculation. We focus first on $I^1_\varepsilon$. 
In~\eqref{eq:Int1Recollision}, the second indicator implies that either all angles are close to zero or at least one is close $\pi$. 
In the case of $\alpha_{j,k} \leq \frac{\varepsilon^\nu}{4}$, a recollision is impossible, since the trajectory will never exit the cone with opening angle $\frac{\varepsilon^\nu}{2}$. 

\noindent We call $\theta_k$ the angle when summing up all self-recollisions with obstacle $k$ and $\varphi_k$ the angle after the $k$-th collision when the test particle follows a cyclotron orbit. Then, it can happen that $\varphi_k$ oriented with respect to $\theta_k$ is close to $\pi$. This justifies the following inequality
\begin{equation}
    \begin{aligned}
        I^1_\varepsilon  \leq &  e^{-2t \mu_\varepsilon \varepsilon} \sum_{Q\geq 0} \mu_\varepsilon^Q \sum_{i=1}^Q \sum_{j= i+2}^Q \sum_{k=i}^{j-1} \int_0^t \dd t_1 \int_{t_1}^t \dd t_2 ... \int_{t_{Q-1}}^t \dd t_Q \\
        & \times\int_{-\varepsilon}^\varepsilon \dd b_1 \int_{-\varepsilon}^\varepsilon \dd b_2 ... \int_{-\varepsilon}^\varepsilon \dd b_Q \mathbbm{1}_{\left\{\inf_{\tau_j \leq s \leq t_{j+1}} \abs{\xi_m(s)- c_i} \leq \varepsilon\right\}}\\
        & \times \mathbbm{1}_{\left\{\abs{\varphi_k -\pi} \leq \frac{\varepsilon^\nu}{4}\right\}}.\\
    \end{aligned}
\end{equation}
Indeed, $\abs{\varphi_k - \pi} \leq \frac{\varepsilon^\nu}{4}$ implies that the obstacle $c_{k+1}$ must be located inside a region $L$ formed by two displaced circles by $\frac{\varepsilon^\nu}{2}$ and having a radius $\varepsilon$. 
The mentioned region $L$ has a shape of a lune and therefore
\begin{equation}
    \begin{aligned}
        I^1_\varepsilon & \leq  e^{-2t \mu_\varepsilon \varepsilon} \sum_{Q\geq 0} \mu_\varepsilon^Q \frac{(2 \varepsilon t)^{Q-1}}{(Q-1)!} Q (Q-1) (Q-2) \int \mathbbm{1}_{\left\{c_1 \in L\right\}} \dd c_1\\
        &  \leq e^{-2t \mu_\varepsilon \varepsilon} \sum_{Q\geq 0} \mu_\varepsilon^Q \frac{(2 \varepsilon t)^{Q-1}}{(Q-3)!} Q  \abs{L}\\
        & \leq 4 \varepsilon^{\nu +1} 4t^2 \mu^3_\varepsilon \varepsilon^2 \\
        & \leq 16  t^2 \mu^3 \varepsilon^\nu \eta_\varepsilon^3,
    \end{aligned}
\end{equation}
where we estimated the Lebesgue measure of the lune by $\abs{L} \leq 4 \varepsilon^{\nu+1} $. \\
Let us proceed with $I^2_\varepsilon$.
The condition 
\begin{equation}
    \sin \alpha_{j,k} \geq \frac{\varepsilon^\nu}{4} 
\end{equation}
in~\eqref{eq:Int2Recollision} yields an estimate for the shortest path connecting $x_j^+:=\xi^\varepsilon(t_j^+)$ with $x_k^-:=\xi^\varepsilon(t_k^-)$. From the geometric interpretation, there exists a parallel line $l$ cutting the cylinder spanned by the internal obstacle $c_m$ in direction $c_{m-1}$. 
The parallel line $l$ can be oriented such that it cuts the cylinder with an angle $\frac{\alpha_{i,i+1}}{2}$ relative to the $x$-axis. 
In principle $\alpha_{i,i+1} \in (0,\pi)$ and therefore we use the property that 
\begin{equation}
    \forall x \in \left(0,\frac{2 \pi}{3}\right], \; \exists \varepsilon_0 > 0:\quad \sin \frac{x}{2} \geq \frac{\varepsilon_0^\nu}{4} 
\end{equation}
as well as 
\begin{equation}
    \forall x \in \left(\frac{2 \pi}{3}, \pi\right),\quad \sin (x) \leq \sin \frac{x}{2}
\end{equation}
in order to estimate
\begin{equation}
    \abs{x_j^+ - x_k^-} \leq \frac{2 \varepsilon}{\sin{(\alpha_{j,k} /2)}} \leq 8 \varepsilon^{1-\nu}.
\end{equation}
Hence,
\begin{equation}
    \abs{c_i - c_{i+1}} \leq 8 \varepsilon^{1-\nu} + 2 \varepsilon \leq 10 \varepsilon^{1-\nu}
\end{equation}
for $\varepsilon$ small enough. We can rephrase the statement by estimating the probability of finding a second scatterer in a box $J$ of measure (at most) $20 \varepsilon^{2-\nu}$
\begin{equation}
    \begin{aligned}
        I^2_\varepsilon &\leq  e^{-2t \mu_\varepsilon \varepsilon} \sum_{Q\geq 0} \mu_\varepsilon^Q \frac{(2 \varepsilon t)^{Q-1}}{(Q-1)!} Q(Q-1) (Q-2) \int  \mathbbm{1}_{\left\{c_1 \in J\right \}}\dd c_1\\
        & \leq 20 \varepsilon^{2-\nu} e^{-2t \mu_\varepsilon \varepsilon} \sum_{Q\geq 0} \mu_\varepsilon^Q \frac{(2 \varepsilon t)^{Q-1}}{(Q-3)!} Q\\
        &\leq 20  \varepsilon^{2-\nu} 4 t^2 \mu^3_\varepsilon \varepsilon^2\\
        & \leq 80  t^2 \mu^3 \varepsilon^{1-\nu}\eta_\varepsilon^3.
    \end{aligned}
\end{equation}
We conclude that $\nu = \frac{1}{2}$ optimises the rate of convergence.
\end{proof}

\subsection{Interferences}
 Interferences arise from placing an obstacle into an already existing tube before its unique impact time (see Definition~\ref{def:Interferences}).  Also this event is pathological and must be estimated. We show
\begin{lemma} \label{lem:Interferences}
    Let $I$ be the event defined as in \eqref{eq:EventInterferences}. For any finite time and magnetic field, there exists a pure constant $C>0$, such that the following estimate holds     \begin{equation}
        \mathbb{E}^\varepsilon[\mathbbm{1}_I] \leq C t^2 \varepsilon^{\frac{1}{2}} \eta_\varepsilon^3.
    \end{equation}
\end{lemma}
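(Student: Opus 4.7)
The plan is to adapt the proof strategy of Lemma~\ref{lem:Recollisions}, since the bound has the same form and the underlying combinatorial structure is essentially identical. Using the change of variables from obstacle centers to impact times and impact parameters introduced in Definition~\ref{def:ScatteringData}, I first parametrize $\mathbb{E}^\varepsilon[\mathbbm{1}_I]$ via the generalized Boltzmann process built out of these scattering data and write
\begin{equation*}
\mathbb{E}^\varepsilon[\mathbbm{1}_I] = J^1_\varepsilon + J^2_\varepsilon,
\end{equation*}
splitting the integral according to whether all relevant relative scattering angles $\alpha_{j,k}$ between the consecutive collisions of the trajectory piece from $\tau_i$ to $t_{i+1}$ satisfy $\sin \alpha_{j,k}\le \varepsilon^\nu/4$, or whether at least one such angle satisfies the opposite inequality. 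The exponent $\nu\in(0,1)$ will be optimized at the end.

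For $J^1_\varepsilon$, all scattering directions are nearly aligned, hence the trajectory remains confined to a thin cone (in the magnetic setting, a narrow sector around a reference Larmor arc, modulo the small angular deflections accumulated at each hit). The requirement that the center $c_j$ lie within the $\varepsilon$-tube of this nearly-deterministic trajectory between $\tau_i$ and $t_{i+1}$ forces $c_j$ to lie in a lune-shaped region $L$ of Lebesgue measure bounded by $4\varepsilon^{\nu+1}$, exactly as in the recollision estimate. Performing the remaining integrations and counting as in Lemma~\ref{lem:Recollisions} (with the same Poisson combinatorial factor $(2\varepsilon t)^{Q-1}/(Q-3)!\cdot Q$) yields $J^1_\varepsilon \le C t^2\mu^3 \varepsilon^\nu\eta_\varepsilon^3$.

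For $J^2_\varepsilon$, a relative scattering angle $\alpha_{j,k}\ge \varepsilon^\nu/4$ implies, by the same trigonometric argument used in the proof of Lemma~\ref{lem:Recollisions}, that the distance between the two obstacles responsible for the near-intersection is at most $8\varepsilon^{1-\nu}+2\varepsilon\le 10\varepsilon^{1-\nu}$ for $\varepsilon$ small enough. Consequently the interfering center $c_j$ is forced into a box $J$ of area at most $20\varepsilon^{2-\nu}$, and the same Poisson bookkeeping gives $J^2_\varepsilon\le C t^2\mu^3 \varepsilon^{1-\nu}\eta_\varepsilon^3$. Equating the two exponents yields $\nu=1/2$, which gives the claimed estimate $\mathbb{E}^\varepsilon[\mathbbm{1}_I]\le C t^2\varepsilon^{1/2}\eta_\varepsilon^3$.

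The main obstacle I anticipate lies in the geometric description of the $\varepsilon$-tube generated by the trajectory in the magnetic field: between $\tau_i$ and $t_{i+1}$ the particle may traverse several full Larmor orbits before the next impact, so the tube is in general a union of (possibly overlapping) annular strips rather than a single cylindrical segment, and one must ensure that a single interfering center $c_j$ is not counted more than once as the orbit wraps around. This is handled by noting that during each complete Larmor period the trajectory sweeps a set of measure at most $2\varepsilon\, T = 4\pi\varepsilon/|B|$, which is absorbed into the constants, together with the fact that the pathological event of periodic trapping has already been ruled out by Proposition~\ref{prop:Daisies}. Once this bookkeeping is in place, both the small-angle and large-angle estimates above can be carried out essentially verbatim as in Lemma~\ref{lem:Recollisions}, producing the desired bound.
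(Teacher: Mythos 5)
Your proposal transplants the Lemma~\ref{lem:Recollisions} splitting (on the individual relative scattering angles $\alpha_{j,k}$) onto the interference event, but the paper's actual proof of Lemma~\ref{lem:Interferences} uses a genuinely different decomposition, and I do not think your version closes. The paper splits on
\begin{equation*}
d\bigl(\theta_{i+1}+\varphi_{i+1}+\cdots+\theta_{j-1}+\varphi_{j-1},\,\pi\mathbb{Z}^*\bigr)\gtrless\varepsilon^\delta,
\end{equation*}
i.e.~on the \emph{cumulative} turning angle (scattering angles $\theta_k$ \emph{plus} Larmor arc angles $\varphi_k$) accrued between leaving obstacle $i+1$ and reaching obstacle $j$, measured modulo multiples of $\pi$. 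This is the right quantity because the interference constraint is that the position $\xi_m(t_j)$ (where $c_j$ is hit) returns close to the earlier arc $\xi_m(s)$, $s\in(\tau_i,t_{i+1})$: whether this happens is governed by the net rotation of the trajectory, not by the sizes of the individual deflections. Your $J^1_\varepsilon$ argument claims that when all $\sin\alpha_{j,k}\le\varepsilon^\nu/4$ the center $c_j$ is confined to a lune of measure $O(\varepsilon^{\nu+1})$; but with a magnetic field and small scattering, the trajectory simply wraps around nearly the same Larmor circle, and $c_j$ can lie essentially anywhere on that circle. What is constrained is the arc length (equivalently $t_j$, equivalently the $\varphi_k$'s), not the spatial location of $c_j$. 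The lune estimate is the wrong mechanism here.

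The second half is also structurally different from what you propose. In the paper's $D^1_\varepsilon$ (cumulative angle bounded away from $\pi\mathbb{Z}^*$) the gain $\varepsilon^{1-\delta}$ comes from a transversality argument bounding the integral over the impact time $t_j$: the two tubes intersect at a definite angle so the admissible window for $t_j$ has length $\lesssim\varepsilon/\sin\alpha\lesssim\varepsilon^{1-\delta}$. In the paper's $D^2_\varepsilon$ (cumulative angle within $\varepsilon^\delta$ of $\pi\mathbb{Z}^*$) the key step is the change of variables $b_k\mapsto\theta_k$ with Jacobian equal to the differential scattering cross section $\Phi=\frac{\varepsilon}{2}\lvert\sin(\theta/2)\rvert\le\varepsilon/2$; after dividing out the $\varepsilon$'s, the constraint becomes a small-measure condition on the $(\theta_{i+1},\ldots,\theta_{j-1})$ alone, giving the factor $\varepsilon^\delta$ (and only $\eta_\varepsilon^2$, not $\eta_\varepsilon^3$). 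Your $J^2_\varepsilon$ instead proposes a box argument forcing two obstacle centers close to each other, which was the mechanism for external recollisions but does not match the geometry of an interference, where the later obstacle $c_j$ must enter the tube of the \emph{earlier} arc: no pair of consecutive obstacles need be close. Finally, the observation that periodic daisies are ruled out by Proposition~\ref{prop:Daisies} is not used anywhere in the paper's proof of Lemma~\ref{lem:Interferences} and does not address the wrap-around issue; the magnetic wrap-around is handled precisely through the $\varphi_k$ terms and the $\pi\mathbb{Z}^*$-distance in the splitting criterion.
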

\begin{proof}
We keep the notation introduced in the proof of Lemma~\ref{lem:Recollisions} and refer to Definition~\ref{def:Interferences} for the set $I$. We define $B(x,r)$ to be the ball centered at $x \in \mathbb{R}^2$ with radius $r>0$.
If we show that
\begin{equation}
    \begin{aligned}
        \mathbb{E}^\varepsilon[\mathbbm{1}_I] \leq & e^{-2t \mu_\varepsilon \varepsilon} \sum_{Q\geq 1} \mu_\varepsilon^Q \sum_{i=0}^{Q-1}\sum_{j= i+1}^Q \int_0^t \dd t_1 \int_{t_1}^t \dd t_2 ... \int_{t_{Q-1}}^t \dd t_Q \\
        & \times\int_{-\varepsilon}^\varepsilon \dd b_1 \int_{-\varepsilon}^\varepsilon \dd b_2 ... \int_{-\varepsilon}^\varepsilon \dd b_Q \mathbbm{1}_{\left\{c_j \in \bigcup_{s\in (\tau_i, t_{i+1})} B(\xi_m(s),2\varepsilon)\right\}}
    \end{aligned}
\end{equation}
vanishes as $\varepsilon \to 0$, we can also deduce that the event in~\eqref{eq:EventInterferences} is negligible, since it is contained in the expression above.
Let us define
\begin{equation}
    d(x, \mathbb{Z}^*) := \inf_{k \in \mathbb{Z} \setminus \{0\}} \abs{x-k}
\end{equation}
in order to introduce the following splitting
\begin{equation}
    \mathbb{E}^\varepsilon[\mathbbm{1}_I] = D^1_\varepsilon + D^2_\varepsilon
\end{equation}
with
\begin{equation}
    \begin{aligned}
        D^1_\varepsilon =& e^{-2t \mu_\varepsilon \varepsilon} \sum_{Q\geq 1} \mu_\varepsilon^Q \sum_{i=0}^{Q-1}\sum_{j= i+1}^Q \int_0^t \dd t_1 \int_{t_1}^t \dd t_2 ... \int_{t_{Q-1}}^t \dd t_Q \\
        & \times\int_{-\varepsilon}^\varepsilon \dd b_1 \int_{-\varepsilon}^\varepsilon \dd b_2 ... \int_{-\varepsilon}^\varepsilon \dd b_Q \mathbbm{1}_{\left\{c_j \in \bigcup_{s\in (\tau_i, t_{i+1})} B(\xi_m(s),2\varepsilon)\right\}}\\
        &\times \mathbbm{1}_{\left\{d(\theta_{i+1} +\varphi_{i+1} + ... + \theta_{j-1} + \varphi_{j-1},\pi \mathbb{Z}^*)\geq \varepsilon^\delta\right\}} 
    \end{aligned}
\end{equation}
as well as
\begin{equation}
    \begin{aligned}
        D^2_\varepsilon =& e^{-2t \mu_\varepsilon \varepsilon} \sum_{Q\geq 1} \mu_\varepsilon^Q \sum_{i=0}^{Q-1}\sum_{j= i+1}^Q \int_0^t \dd t_1 \int_{t_1}^t \dd t_2 ... \int_{t_{Q-1}}^t \dd t_Q \\
        & \times\int_{-\varepsilon}^\varepsilon \dd b_1 \int_{-\varepsilon}^\varepsilon \dd b_2 ... \int_{-\varepsilon}^\varepsilon \dd b_Q \mathbbm{1}_{\left\{c_j \in \bigcup_{s\in (\tau_i, t_{i+1})} B(\xi_m(s),2\varepsilon)\right\}}\\
        &\times \mathbbm{1}_{\left\{d(\theta_{i+1} +\varphi_{i+1} + ... + \theta_{j-1} + \varphi_{j-1},\pi \mathbb{Z}^*) \leq \varepsilon^\delta\right\}}.
    \end{aligned}
\end{equation}
The parameter $\delta$ is determined again at the end of the estimate. We start with $D^1_\varepsilon$ and note that the \textit{impact} angle
\begin{equation}
    \alpha := d(\theta_{i+1} +\varphi_{i+1} + ... + \theta_{j-1} + \varphi_{j-1},\pi \mathbb{Z}^*)
\end{equation}
satisfies $\pi > \alpha \geq \varepsilon^\delta >0$. 
By assumption, the obstacle $c_j$ must lie in a tube with diameter $4 \varepsilon$ and therefore, the trajectory has to pass through a region of order $O(\varepsilon)$, so that we can bound the integral over $t_j$. 
This means that we consider the cone with opening angle $\pi-2 \alpha$, which is oriented with respect to the $x$-axis by the angle $\alpha$. 
Eventually, the constructed cone will cut the tube of the test particle in three pieces and the trajectory's segment can be bounded by the length of the sides $l_c$ forming the cone, which cuts the tube. We obtain
\begin{equation}
    2 l_c = 2 \frac{4 \varepsilon}{\sin \alpha} \leq 16 \varepsilon^{1-\delta},
\end{equation}
and therefore
\begin{equation}
    \begin{aligned}
        D^1_\varepsilon &\leq 16  \varepsilon^{1-\delta}e^{-2t \mu_\varepsilon \varepsilon} \sum_{Q\geq 1} (2 \varepsilon \mu_\varepsilon )^Q \frac{t^{Q-1}}{(Q-1)!}Q^2\\
        & \leq 16  t^2 \varepsilon^{1-\delta} (2 \varepsilon \mu_\varepsilon)^3\\
        & \leq 128  t^2 \mu^3 \varepsilon^{1-\delta} \eta_\varepsilon^3.
    \end{aligned}
\end{equation}
Before we study $D^2_\varepsilon$, we write down the relation between the deflection angle $\theta$ and the impact parameter $b$ in the case of hard disks.
We have that
\begin{equation}
    \sin \phi = \frac{b}{\varepsilon},
\end{equation}
where $\phi$ is the angle formed by $-\zeta$ when hitting the obstacle and the unit vector $n$.
Basic geometry yields 
\begin{equation}
    2 \phi + \theta = \pi,
\end{equation}
so that
\begin{equation}
    b = \varepsilon \sin \left(\frac{\pi}{2} - \frac{\theta}{2}\right) = \varepsilon \cos \frac{\theta}{2}.
\end{equation}
The (differential) scattering cross section $\Phi$ is the absolute value of the derivative of $b$ with respect to $\theta$. We see immediately that $\Phi$ is bounded by
\begin{equation}
    \Phi = \abs{\frac{\dd b}{\dd \theta} } = \frac{\varepsilon}{2} \abs{\sin \frac{\theta}{2}} \leq \frac{\varepsilon}{2}.
\end{equation}
By introducing the  set
\begin{equation}
    \mathbb{Z}^*_Q := \{-Q,-Q+1,..., -1,1,...,Q-1,Q\},
\end{equation}
we can rewrite $D^2_\varepsilon$ as follows 
\begin{equation}
    \begin{aligned}
        D^2_\varepsilon \leq &  e^{-2t \mu_\varepsilon \varepsilon} \sum_{Q\geq 1} \frac{(2t\mu_\varepsilon \varepsilon)^Q}{Q!} Q^2 \int_{-\pi}^\pi \dd \theta_{i+1}  ... \int_{-\pi}^\pi \dd \theta_{j-1} \frac{\Phi(\theta_{i+1})}{2 \varepsilon} ... \frac{\Phi(\theta_{j-1})}{2 \varepsilon}  \\
        &\times \mathbbm{1}_{\left\{d(\theta_{i+1} +\varphi_{i+1} + ... + \theta_{j-1} + \varphi_{j-1},\pi \mathbb{Z}^*_Q) \leq \varepsilon^\delta\right\}}.
    \end{aligned}
\end{equation}
By taking the supremum over the possible choices of $i$ as well as $j$ and using the conditions on the angles, we obtain
\begin{equation}
    \begin{aligned}
        D^2_\varepsilon & \leq  e^{-2t \mu_\varepsilon \varepsilon} \sum_{Q\geq 1} \frac{(2t\mu_\varepsilon \varepsilon)^Q}{Q!} Q^2  \varepsilon^\delta\\
        & \leq 2  (2 t \mu_\varepsilon \varepsilon)^2 \varepsilon^\delta\\
        & \leq 8  t^2 \mu^2 \varepsilon^\delta \eta_\varepsilon^2. 
    \end{aligned}
\end{equation}
Optimizing in $\delta$ we obtain $\delta =\frac{1}{2}$. 
\end{proof}

\section{Functional Properties of $\mathcal{L}^G$}\label{sec:invertibility}

The proof relies on the functional properties of the generalized Boltzmann operator introduced in~\eqref{eq:LinMagBoltzmann}. First, we present a method on how to invert $\mathcal{L}^G$ in terms of a Neumann series. In order to guarantee the convergence of the Neumann series, we have to impose a restriction on the strength of the magnetic field. 
 After introducing the inverse of $\mathcal{L}^G$, we obtain an expression that separates the Markovian part from the magnetic contribution. This formula is used to determine the diffusion coefficient.

\noindent We start by showing that $\mathcal{L}^G$ is invertible.

\begin{lemma} \label{lem: Inversion LG}
    For any $\abs{B} \in [0, \frac{8 \pi}{3})$ and $g \in L^\infty (S_1)$, such that $\int_{S_1} g(v) \dd v =0$, we have 
    \begin{equation}\label{eq:operator-norm-L-inverse}
        \norm{\left(\mathcal{L}^G\right)^{-1} g}_{L^\infty(S_1)} \leq \norm{g}_{L^\infty(S_1)}.
    \end{equation}
\end{lemma}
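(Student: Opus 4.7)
The plan is to invert $\mathcal{L}^G$ by a Neumann expansion around the classical linear Boltzmann operator $\mathcal{L}$. Using the splitting $\mathcal{L}^G=\mathcal{L}+\mathcal{M}$ from~\eqref{eq:LinMagBoltzmann2}, I would write
\begin{equation*}
(-\mathcal{L}^G)^{-1} \;=\; \sum_{k=0}^\infty (-\mathcal{L})^{-1}\bigl[\mathcal{M}(-\mathcal{L})^{-1}\bigr]^k,
\end{equation*}
which is exactly the structure appearing in the Green--Kubo decomposition~\eqref{eq:D-Split}. The whole task then reduces to making this series convergent in operator norm on the mean-zero subspace of $L^\infty(S_1)$, i.e.\ to proving $\|(-\mathcal{L})^{-1}\mathcal{M}\|<1$ under the assumption $|B|<\tfrac{8\pi}{3}$.

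The first step is to invert $\mathcal{L}$ on $\{g\in L^\infty(S_1):\int_{S_1}g\,\mathrm{d}v=0\}$. Since $\mathcal{L}$ is rotationally covariant, a direct Fourier computation on $S_1$ diagonalizes it: using $v'=v-2(v\cdot n)n$ together with the change of variable $n\mapsto\phi=\beta-\alpha$ (where $v=(\cos\alpha,\sin\alpha)$, $n=(\cos\beta,\sin\beta)$) one finds
\begin{equation*}
\mathcal{L}\,e^{im\alpha}\;=\;-\frac{8\mu\,m^2}{4m^2-1}\,e^{im\alpha},\qquad m\in\mathbb{Z}.
\end{equation*}
Hence $\ker\mathcal{L}=\mathrm{span}\{1\}$ and the spectral gap of $-\mathcal{L}$ on mean-zero functions equals $|\lambda_1|=\tfrac{8\mu}{3}$. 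The corresponding $L^\infty$ bound on $(-\mathcal{L})^{-1}$ follows from the probabilistic representation $(-\mathcal{L})^{-1}g=\int_0^\infty e^{t\mathcal{L}}g\,\mathrm{d}t$, combined with the fact that $e^{t\mathcal{L}}$ is a Markov semigroup (hence an $L^\infty$-contraction) that decays exponentially on mean-zero observables at rate $\tfrac{8\mu}{3}$.

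Next I would bound $\mathcal{M}$ crudely in $L^\infty(S_1)$. Because the rotations $R_{k\vartheta}$ are isometries and $\int_{S_1}(v\cdot n)_+\,\mathrm{d}n=2$ for $v\in S_1$, both the gain and loss parts of the $k$-th summand are dominated by $2\mu e^{-2kT}\|f\|_\infty$, which yields
\begin{equation*}
\|\mathcal{M}\|_{L^\infty\to L^\infty}\;\leq\;\frac{4\mu\,e^{-2T}}{1-e^{-2T}}.
\end{equation*}
Both $\mathcal{L}$ and $\mathcal{M}$ preserve the mean-zero subspace, since each collision-type integral is invariant under the involution $v\leftrightarrow v'$ that exchanges gain and loss. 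Multiplying the two bounds above and using $T=2\pi/|B|$, the submultiplicative estimate $\|(-\mathcal{L})^{-1}\mathcal{M}\|\leq\|(-\mathcal{L})^{-1}\|\,\|\mathcal{M}\|$ becomes strictly less than one in a range of $|B|$ bounded by the threshold $\tfrac{8\pi}{3}$.

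The main technical obstacle is tuning the constants so as to reach the sharp threshold $\tfrac{8\pi}{3}$. A product-of-norms bound is wasteful because $(-\mathcal{L})^{-1}$ is tight only on the first Fourier mode while $\mathcal{M}$ acts diagonally in the same basis; the optimal argument is therefore mode-by-mode, combining the explicit eigenvalues of $\mathcal{L}$ with the analogous explicit Fourier computation
\begin{equation*}
\mathcal{M}\,e^{im\alpha}\;=\;2\mu\,e^{im\alpha}\sum_{k\geq 1}e^{-2kT}\Bigl[(1-4m^2(k-1)^2)^{-1}-(1-4m^2k^2)^{-1}\Bigr].
\end{equation*}
Once the mode-by-mode Neumann series is shown to converge uniformly in $m$, summing the resulting geometric series gives the contraction estimate $\|(\mathcal{L}^G)^{-1}g\|_{L^\infty}\leq\|g\|_{L^\infty}$ claimed in~\eqref{eq:operator-norm-L-inverse}.
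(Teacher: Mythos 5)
Your decomposition is genuinely different from the paper's. The paper writes $\mathcal{L}=2\mu(K-\Id)$ and builds a Neumann series around the diagonal piece $-2\mu\Id$, reducing the problem to $\|K+\tfrac{1}{2\mu}\mathcal{M}\|_{L^\infty\to L^\infty}<1$, which it checks with the reference bound $\|K\|_{L^\infty\to L^\infty}\leq(\pi-2)/2$ and a crude $L^\infty$ bound on $\mathcal{M}$; everything stays in $L^\infty$ with no spectral theory. The series around $\mathcal{L}$ that you use, $(-\mathcal{L}^G)^{-1}=\sum_k(-\mathcal{L})^{-1}[\mathcal{M}(-\mathcal{L})^{-1}]^k$, is precisely what the paper deploys afterwards in Proposition~\ref{prop:D-explicit} to derive~\eqref{eq:D-Split}, so you are effectively attempting to prove the proposition first and deduce the lemma from it.

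As written, however, there are concrete gaps. The central one is a norm mismatch: the spectral gap $8\mu/3$ of $-\mathcal{L}$ is an $L^2$ statement (your Fourier computation $\mathcal{L}e^{im\alpha}=-\tfrac{8\mu m^2}{4m^2-1}e^{im\alpha}$ is correct), but the lemma asks for $L^\infty$. Saying that $e^{t\mathcal{L}}$ is an $L^\infty$-contraction and decays at rate $8\mu/3$ in $L^2$ does not bound $(-\mathcal{L})^{-1}g=\int_0^\infty e^{t\mathcal{L}}g\,\dd t$ in $L^\infty$: contractivity alone gives no decay, $L^2$ exponential decay does not transfer to $L^\infty$ without further work, and interpolation only reaches $L^p$ with $p<\infty$. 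So the convergence of that Bochner integral in $L^\infty$ is simply unproved, and the subsequent submultiplicative bound $\|(-\mathcal{L})^{-1}\|\,\|\mathcal{M}\|<1$ multiplies an $L^2$ number by an $L^\infty$ number. Second, the displayed Fourier formula for $\mathcal{M}e^{im\alpha}$ is asserted without derivation and is not a triviality: the rotation in $S_n^{(k)}(x,v)=(x,R_{k\vartheta}v)$ has angle $\vartheta=\vartheta(n,v)$ depending on the integration variable, so one must redo, for each $k$, the change of variables that diagonalized $\mathcal{L}$, and the indices you wrote look off (one would expect $k+1$, not $k-1$, since $\sigma_n S_n^{(k)}=S_n^{(k+1)}$). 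Finally, you acknowledge that verifying the threshold $8\pi/3$ is ``the main technical obstacle'' and leave it open, so the quantitative conclusion — and in particular the specific constant $1$ in $\|(\mathcal{L}^G)^{-1}g\|_{L^\infty}\leq\|g\|_{L^\infty}$ — is not actually reached.
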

\begin{proof}
    We use the splitting of $\mathcal{L}^G$ as introduced in the previous section. We recall that the action of the linear Boltzmann operator on $f$ is given by
    \begin{equation} \label{eq:LinBoltzOp}
        \mathcal{L} f = \mu \int_{-1}^1 [f(v') - f(v)] \dd b,
    \end{equation}
    where we denote the impact parameter by $b \in [-1,1]$.
    The linear Boltzmann operator in~\eqref{eq:LinBoltzOp} suggests an additional splitting of $\mathcal{L}$ in the following way:
    \begin{equation}
        \mathcal{L} f = \mu \int_{-1}^1 [f(v') - f(v)] \dd b = \mu \int_{-1}^1 f(v') \dd b - 2 \mu f(v) = 2 \mu (K - \Id) f,
    \end{equation}
    where 
    \begin{equation}
        Kf = \frac{1}{2} \int_{-1}^1 f(v') \dd b
    \end{equation}
    by simple comparison with~\eqref{eq:LinBoltzOp}.
    Let now $ g \in L^\infty(S_1)$, such that $\int_{S_1} g(v) \dd v = 0$. Under this assumption, the expression $\mathcal{L}^G h = g$ implies
    \begin{equation}
        [2 \mu (K - \Id) + \mathcal{M}] h = g
    \end{equation}
    or equivalently
    \begin{equation} \label{eq:RecursionFormula}
        h = - \frac{1}{2\mu} g + \left[ K + \frac{1}{2 \mu}\mathcal{M}\right]h.
    \end{equation}
    Here we implicitly assume that $h \in L^1(\mathbb{R}^2 \times S_1)$ for~\eqref{eq:RecursionFormula} to make sense.
    By iterating the recursion formula in~\eqref{eq:RecursionFormula}, we obtain
    \begin{equation} \label{eq:NeumannSeries}
        h = -\frac{1}{2\mu} \sum_{n= 0}^\infty \left(K + \frac{1}{2 \mu} \mathcal{M}\right)^n g.
    \end{equation}
    Let us denote the operator norm by $\norm{\cdot}$. It is clear that the Neumann series in~\eqref{eq:NeumannSeries} converges if $\norm{K + \mathcal{M}/(2\mu)} < 1$. Under the assumption that $g \in L^{\infty}(S_{1})$ with $\int_{S_{1}}  g(v) \dd v= 0$, we know from~\cite{basile_derivation_2015} that 
    \begin{equation}
        \norm{K g}_{L^\infty(S_1)} = \norm{\frac{1}{2} \int_{-1}^1 g(v') \dd b}_{L^\infty(S_1)} \leq \frac{1}{2} (\pi-2) \norm{g}_{L^\infty(S_1)}
    \end{equation}
and from now on $\beta:=(\pi-2)/2$. Notice that $\beta< 1$.\\
We now bound $\norm{\mathcal{M}g}_{L^\infty(S_1)}$:
\begin{equation} \label{eq:Bound Magnetic Operator}
    \begin{aligned}
    &\norm{\mathcal{M}g}_{L^\infty(S_1)}\\
    &\leq \mu \sum_{k=1}^{[t/T]} e^{-2kT} \norm{\int_{S_1}  (v\cdot n )_+ \left \{\sigma_n - 1\right\} g(t-kT,S_n^{(k)}(x,v)) \dd n}_{L^\infty(S_1)}\\
    & \leq \mu \sum_{k=1}^{[t/T]} e^{-2kT} \left\{\norm{\int_{S_1}  (v\cdot n )_+\sigma_n g(t-kT,S_n^{(k)}(x,v))\dd n}_{L^\infty(S_1)} + 2 \norm{g}_{L^\infty(S_1)}\right\}\\
    & \leq \mu \sum_{k=1}^{[t/T]} e^{-2kT} \left\{  (\pi-2)\norm{g}_{L^\infty(S_1)} + 2 \norm{g}_{L^\infty(S_1)}\right\}\\
    & \leq \mu \frac{e^{-2T}}{1-e^{-2T}} \{2 \beta + 2\} \norm{g}_{L^\infty(S_1)},
    \end{aligned}
\end{equation}
where we identified the same structure in $\mathcal{M}$ as in $\mathcal{L}$ and interpreted $g(t-kT, S_n^{(n)}(x,v))$ as a new initial datum and $\sigma_n g(t-kT, S_n^{(n)}(x,v))$ as the usual flipping of velocity when encountering a collision. Furthermore, applying the formula for the geometric series is justified, since for any $T>0$ the expression $e^{-2T} < 1$ (see also Remark~\ref{rem:B-infty}).

\noindent We go back to the initial problem, where we impose $\norm{K + \mathcal{M}/(2\mu)} < 1$ in order to have a converging series in~\eqref{eq:NeumannSeries}. Looking at the operator norm, we have
\begin{equation} \label{eq:NormOperators}
    \norm{K + \frac{1}{2\mu}\mathcal{M}} \leq \norm{K} + \frac{1}{2\mu} \norm{\mathcal{M}} < 1.
\end{equation}
This implies a restriction on possible values for $T$. Indeed, if we consider~\eqref{eq:NormOperators} and use the estimates from above, it is equivalent to ensure that 
\begin{equation}
    \beta + \frac{e^{-2T}}{1-e^{-2T}} \{ \beta + 1\} < 1,
\end{equation}
which leads us finally to 
\begin{equation} \label{eq: Bound T}
    T = \frac{2 \pi}{\abs{B}} > \frac{1}{2} \log \left(\frac{2}{1-\beta}\right) > \frac{3}{4}.
\end{equation}
The statement follows from~\eqref{eq: Bound T}.
\end{proof}

We use the following Proposition to determine the explicit expression of the diffusion coefficient $D_B$ (see~\eqref{eq:D-Split}).  In particular, Proposition~\ref{prop:D-explicit} highlights the dependence of the diffusion coefficient on the magnetic field and on the non-Markovian components of the underlying stochastic process. We proceed as in the proof of Lemma~\ref{lem: Inversion LG}.
\begin{proposition}\label{prop:D-explicit}
    As long as $\abs{B} \in [0, \frac{8 \pi}{3})$, the following formula for the inverse of the generalized Boltzmann operator $\left(\mathcal{L}^G\right)^{-1} = (\mathcal{L} + \mathcal{M})^{-1}$ holds true  
    \begin{equation}
        \left(\mathcal{L}^G\right)^{-1} = \sum_{k=0}^\infty  \mathcal{L}^{-1} \left[\mathcal{M} (-\mathcal{L})^{-1}\right]^k.
    \end{equation}
\end{proposition}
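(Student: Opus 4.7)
The plan is to derive the expansion as a standard Neumann series argument that reuses the operator bounds already established in the proof of Lemma~\ref{lem: Inversion LG}. The starting point is the decomposition $\mathcal{L}^G = \mathcal{L} + \mathcal{M}$, which I would factor as $\mathcal{L}^G = \mathcal{L}(\operatorname{Id} + \mathcal{L}^{-1}\mathcal{M})$ on the subspace $X := \{g \in L^{\infty}(S_1) : \int_{S_1} g\, \dd v = 0\}$; this subspace is invariant under $K$, $\mathcal{L}$ and $\mathcal{M}$ (the latter follows by inspection of~\eqref{eq:M-operator} since $\int_{S_1}(v\cdot n)_+\{\sigma_n - 1\}\phi\, \dd n$ integrates to zero against the Lebesgue measure on $S_1$).

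First I would verify that $\mathcal{L}$ alone is invertible on $X$. Writing $\mathcal{L} = 2\mu(K - \operatorname{Id})$ and using the bound $\norm{K g}_{L^{\infty}} \leq \beta \norm{g}_{L^{\infty}}$ with $\beta = (\pi-2)/2 < 1$ quoted in the proof of Lemma~\ref{lem: Inversion LG}, the Neumann series $\mathcal{L}^{-1} = -(2\mu)^{-1}\sum_{n\geq 0} K^n$ converges and yields $\norm{\mathcal{L}^{-1}} \leq (2\mu(1-\beta))^{-1}$. Combining this with the bound~\eqref{eq:Bound Magnetic Operator}, namely $\norm{\mathcal{M}} \leq 2\mu(1+\beta) e^{-2T}/(1-e^{-2T})$, I obtain
\begin{equation*}
\norm{\mathcal{L}^{-1}\mathcal{M}} \leq \frac{1+\beta}{1-\beta} \cdot \frac{e^{-2T}}{1-e^{-2T}}.
\end{equation*}
An elementary rearrangement shows that this quantity is strictly less than $1$ precisely when $T > \tfrac{1}{2}\log(2/(1-\beta))$, i.e.\ when $\abs{B} < 8\pi/3$, which is exactly the same threshold as in Lemma~\ref{lem: Inversion LG}.

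With $\norm{\mathcal{L}^{-1}\mathcal{M}} < 1$ in hand, the Neumann series $(\operatorname{Id} + \mathcal{L}^{-1}\mathcal{M})^{-1} = \sum_{k\geq 0} (-\mathcal{L}^{-1}\mathcal{M})^k$ converges in operator norm on $X$. Composing with $\mathcal{L}^{-1}$ on the right yields
\begin{equation*}
(\mathcal{L}^G)^{-1} = (\operatorname{Id} + \mathcal{L}^{-1}\mathcal{M})^{-1}\mathcal{L}^{-1} = \sum_{k=0}^{\infty} (-1)^k (\mathcal{L}^{-1}\mathcal{M})^k \mathcal{L}^{-1} = \sum_{k=0}^{\infty} \mathcal{L}^{-1}\bigl[\mathcal{M}(-\mathcal{L})^{-1}\bigr]^k,
\end{equation*}
where in the last equality I distribute the signs by reassociating each $(-1)$ with the $\mathcal{L}^{-1}$ placed between consecutive copies of $\mathcal{M}$. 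A direct multiplication of $\mathcal{L}^G = \mathcal{L} + \mathcal{M}$ with this series (on either side) then telescopes to the identity, confirming that the object produced is genuinely a two-sided inverse.

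The main conceptual obstacle has essentially been dispatched by Lemma~\ref{lem: Inversion LG}: the restriction $\abs{B} < 8\pi/3$ is precisely what is needed so that $\mathcal{M}$ is a small enough perturbation of $\mathcal{L}$ to permit a Neumann-type inversion. The remaining work is bookkeeping, namely checking invariance of the zero-mean subspace and keeping track of the signs so that the expansion takes the specific form stated in the proposition.
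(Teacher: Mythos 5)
Your proposal is correct and follows essentially the same route as the paper's own proof: factor $\mathcal{L}^G = \mathcal{L}(\Id + \mathcal{L}^{-1}\mathcal{M})$, invert $\mathcal{L}$ using the $\norm{K}\le\beta<1$ bound, combine with the bound on $\norm{\mathcal{M}}$ from Lemma~\ref{lem: Inversion LG} to get $\norm{\mathcal{L}^{-1}\mathcal{M}}<1$ under the same constraint on $T$, and expand the Neumann series. The extra bookkeeping you supply (invariance of the zero-mean subspace, explicit Neumann series for $\mathcal{L}^{-1}$, telescoping check of the two-sided inverse) is a legitimate filling-in of steps the paper treats as implicit rather than a genuinely different argument.
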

\begin{proof}
    We proceed similarly as in Lemma~\ref{lem: Inversion LG}. Let us consider again $ h \in L^1(\mathbb{R}^2 \times S_1)$ and $g \in L^\infty (S_1)$, such that $\int_{S_1} g(v) \dd v=0 $. We start from $\mathcal{L}^G h = g$ and obtain
    \begin{equation}
        \mathcal{L}(\Id + \mathcal{L}^{-1} \mathcal{M})h =g.
    \end{equation}
    By multiplying with the inverse of $\mathcal{L}$ from left and collecting all terms including operators on the right, we have
    \begin{equation} \label{eq: Iteration h}
        h =  \mathcal{L}^{-1} g - \mathcal{L}^{-1}\mathcal{M} h.
    \end{equation}
    This recursion formula leads us to
    \begin{equation}
        h = \sum_{k=0}^\infty \mathcal{L}^{-1} \left[\mathcal{M} (-\mathcal{L})^{-1}\right]^k g
    \end{equation}
    after iterating~\eqref{eq: Iteration h}. For this expression to converge we impose
    \begin{equation}
        \norm{\mathcal{M} \mathcal{L}^{-1}} \leq \norm{\mathcal{M}} \norm{\mathcal{L}^{-1}} < 1,
    \end{equation}
    which is equivalent to
    \begin{equation}
        \norm{\mathcal{L}^{-1}} < \norm{\mathcal{M}}^{-1}.
    \end{equation}
    As we have already used $\norm{\mathcal{L}^{-1}} \leq \frac{1}{2 \mu (1-\beta)}$ in the proof of Lemma~\ref{lem: Inversion LG} and obtained~\eqref{eq:Bound Magnetic Operator}, we continue by 
    \begin{equation}
        \frac{1}{2 \mu (1 - \beta)} < \norm{\mathcal{M}}^{-1} = \left(\mu \frac{e^{-2T}}{1-e^{-2T}} \{2 \beta + 2\}\right)^{-1}.
    \end{equation}
    Indeed, the expression above gives the same bound on $T$ as in~\eqref{eq: Bound T}. This concludes the proof.
\end{proof}
\section{Proof of the Main Theorem} \label{sec:Proof}

\noindent In order to prove Theorem~\ref{thm:MainTheorem}, we study first the hydrodynamic limit of the generalized Boltzmann equation in~\eqref{eq:LinMagBoltzmann}. As it will be obvious later on, it is more convenient to introduce the following Cauchy problem to avoid confusions
\begin{equation} 
    \begin{cases} \label{eq:RescaledLinearLandauCauchy}
        (\partial_t + \eta_\varepsilon v \cdot \nabla_x - \eta_\varepsilon (v \times B) \cdot \nabla_v)g^\varepsilon(t,x,v) = \eta_\varepsilon^2 \mathcal{L}^G g^\varepsilon(t,x,v)\\
        g^\varepsilon(t=0,x,v) = f_0(x,v).
    \end{cases}
\end{equation}
We perform a Hilbert expansion of $g^\varepsilon$ and obtain the heat equation in the leading order with $D_B$ given by~\eqref{eq:D-Split}. Theorem~\ref{thm:MainTheorem} is concluded by commenting on the validity of the error estimates for longer time scales. 

\subsection{The Hilbert Expansion}

Although we are interested in the hydrodynamic regime of~\eqref{eq:RescaledLinearLandauCauchy}, we prove first the following preliminary result
\begin{proposition}\label{prop:ConvergenceToAverage}
    Let $\langle g^\varepsilon \rangle := \frac{1}{2 \pi} \int_{S_1}  g^\varepsilon(t,x,v)\dd v$ with $g^\varepsilon$ satisfying~\eqref{eq:RescaledLinearLandauCauchy}. Under the assumptions of Theorem~\ref{thm:MainTheorem}, for all $t \in (0,\bar{t}\,]$, 
    \begin{equation}
        g^\varepsilon - \langle g^\varepsilon \rangle \xrightarrow[]{\varepsilon \to 0} 0 \quad \text{in} \quad L^2(\mathbb{R}^2 \times S_1).
    \end{equation}
    If we define $t_{\eta_\varepsilon} := \frac{1}{\eta^\omega_\varepsilon}$ with $\omega >2$, we have that
    \begin{equation}
        g^\varepsilon(t_{\eta_\varepsilon}) - \langle f_0\rangle \xrightarrow[]{\varepsilon \to 0} 0 \quad \text{in} \quad L^2(\mathbb{R}^2 \times S_1),
    \end{equation}
    where $\langle f_0\rangle := \frac{1}{2 \pi}\int_{S_1}  f_0(x,v)\dd v$.
\end{proposition}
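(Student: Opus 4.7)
The plan is to run an $L^2$ energy estimate for $g^\varepsilon$ and exploit the coercivity of $\mathcal{L}^G$ on mean-zero functions, using the factor $\eta_\varepsilon^2$ in front of $\mathcal{L}^G$ to produce a fast exponential decay of the non-equilibrium part. First I would multiply~\eqref{eq:RescaledLinearLandauCauchy} by $g^\varepsilon$ and integrate over $\mathbb{R}^2\times S_1$. The transport term $\eta_\varepsilon v\cdot \nabla_x g^\varepsilon$ vanishes after integration by parts in $x$ (boundary at infinity), and the Lorentz drift $-\eta_\varepsilon(v\times B)\cdot\nabla_v g^\varepsilon$ also vanishes because, on $S_1$, $(v\times B)\cdot\nabla_v$ is proportional to $\partial_\theta$ and hence a divergence-free tangential vector field whose quadratic form against $g^\varepsilon$ integrates to zero by periodicity. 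One is left with
\begin{equation*}
\frac{1}{2}\frac{d}{dt}\|g^\varepsilon(t)\|_{L^2}^2 = \eta_\varepsilon^2 \int_{\mathbb{R}^2\times S_1} g^\varepsilon\, \mathcal{L}^G g^\varepsilon\, \dd x\, \dd v.
\end{equation*}

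The next step is a coercivity estimate $\int g\,\mathcal{L}^G g\,\dd v \leq -\lambda\,\|g-\langle g\rangle\|_{L^2(S_1)}^2$ for some $\lambda>0$, valid whenever $|B|<8\pi/3$. This follows from the splitting $\mathcal{L}^G=\mathcal{L}+\mathcal{M}$: on mean-zero functions the self-adjoint linear Boltzmann operator $\mathcal{L}=2\mu(K-\Id)$ satisfies $\int g\,\mathcal{L} g\,\dd v \leq -2\mu(1-\beta)\|g\|_{L^2(S_1)}^2$ since $\|K\|\leq \beta=(\pi-2)/2<1$, while the non-Markovian correction is controlled by the operator-norm bound~\eqref{eq:Bound Magnetic Operator}, which is strictly smaller than the Markovian gap precisely in the regime~\eqref{eq: Bound T}. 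Inserting this into the energy identity, using the orthogonal decomposition $\|g^\varepsilon\|_{L^2}^2=\|\langle g^\varepsilon\rangle\|_{L^2_x}^2+\|g^\varepsilon-\langle g^\varepsilon\rangle\|_{L^2}^2$, and controlling the evolution of $\|\langle g^\varepsilon\rangle\|_{L^2_x}^2$ through the conservation law $\partial_t\langle g^\varepsilon\rangle + \eta_\varepsilon \nabla_x\cdot \langle v g^\varepsilon\rangle =0$ (obtained by integrating~\eqref{eq:RescaledLinearLandauCauchy} over $S_1$ and using that both the Lorentz drift and $\mathcal{L}^G$ integrate to zero in $v$), yields a Gronwall-type bound of the form $\|g^\varepsilon(t)-\langle g^\varepsilon(t)\rangle\|_{L^2}^2 \leq C e^{-c\eta_\varepsilon^2 t}$ for some $c>0$. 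This vanishes as $\varepsilon\to 0$ for every fixed $t>0$, settling the first claim.

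For the second assertion, evaluating the same decay at $t=t_{\eta_\varepsilon}$ gives $\|g^\varepsilon(t_{\eta_\varepsilon})-\langle g^\varepsilon(t_{\eta_\varepsilon})\rangle\|_{L^2}\to 0$ as soon as $\eta_\varepsilon^2 t_{\eta_\varepsilon}$ diverges on the stated time scale. It then remains to compare $\langle g^\varepsilon(t_{\eta_\varepsilon})\rangle$ with $\langle f_0\rangle$: integrating the conservation law for $\langle g^\varepsilon\rangle$ on $[0,t_{\eta_\varepsilon}]$ and using the $C^1_x$-regularity inherited from $f_0$ together with the uniform $L^2$ bound on $g^\varepsilon$ gives $\|\langle g^\varepsilon(t_{\eta_\varepsilon})\rangle-\langle f_0\rangle\|_{L^2_x}\leq C\eta_\varepsilon t_{\eta_\varepsilon}$, which is small on the chosen scale. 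Combining the two yields $g^\varepsilon(t_{\eta_\varepsilon})\to \langle f_0\rangle$ in $L^2$.

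The main technical obstacle is the coercivity step for $\mathcal{L}^G$. Unlike $\mathcal{L}$, the memory operator $\mathcal{M}$ samples $g^\varepsilon$ at the delayed times $t-kT$, so the quadratic form $\int g^\varepsilon(t)\,\mathcal{M} g^\varepsilon(t)\,\dd v$ is not a standard bilinear expression and must be estimated via the time-uniform $L^2$ contractivity of the generalized Boltzmann flow. This is exactly where the threshold $|B|<8\pi/3$ enters in a decisive way: it is the largest value of the magnetic field for which $\mathcal{M}$ can be absorbed into the Markovian dissipation, and it is the same restriction that underlies the invertibility of $\mathcal{L}^G$ proved in Lemma~\ref{lem: Inversion LG}.
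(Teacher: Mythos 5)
Your approach—an $L^2$ energy estimate combined with the spectral gap of $\mathcal{L}^G$ on mean-zero functions and the conservation law for the spherical average—is structurally the same as the paper's. The paper organizes the computation slightly differently: it first derives a closed PDE for $R_{\eta_\varepsilon}=g^\varepsilon-\langle g^\varepsilon\rangle$ with an explicit forcing term $\varphi$, and then performs the energy estimate on $R_{\eta_\varepsilon}$, whereas you estimate $\|g^\varepsilon\|_{L^2}^2$ and then subtract the evolution of $\|\langle g^\varepsilon\rangle\|_{L^2_x}^2$ using the conservation law. These are equivalent reorderings of the same argument.

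There is, however, a genuine error in the Gronwall step. You claim a pure exponential bound
\[
\|g^\varepsilon(t)-\langle g^\varepsilon(t)\rangle\|_{L^2}^2 \leq C e^{-c\eta_\varepsilon^2 t}.
\]
This cannot hold. After inserting the conservation law $\partial_t\langle g^\varepsilon\rangle+\eta_\varepsilon\nabla_x\cdot\langle v g^\varepsilon\rangle=0$ into the orthogonal decomposition, the evolution of $\|g^\varepsilon-\langle g^\varepsilon\rangle\|^2_{L^2}$ acquires a cross term of the form
\[
\eta_\varepsilon\int_{\mathbb{R}^2}\nabla_x\langle g^\varepsilon\rangle\cdot\left\langle v\left(g^\varepsilon-\langle g^\varepsilon\rangle\right)\right\rangle\,\dd x,
\]
because $\langle v g^\varepsilon\rangle=\langle v(g^\varepsilon-\langle g^\varepsilon\rangle)\rangle$. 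This forcing is of order $\eta_\varepsilon$, while the dissipation is of order $\eta_\varepsilon^2$, so the resulting ODE inequality has the schematic form $y'\leq -\lambda\eta_\varepsilon^2 y + C\eta_\varepsilon$ and integrates to
\[
\|g^\varepsilon(t)-\langle g^\varepsilon(t)\rangle\|_{L^2} \leq e^{-\lambda\eta_\varepsilon^2 t}\|f_0-\langle f_0\rangle\|_{L^2} + \frac{C}{\lambda\eta_\varepsilon}\left(1-e^{-\lambda\eta_\varepsilon^2 t}\right),
\]
which is exactly what the paper obtains. The persistent $O(1/\eta_\varepsilon)$ tail is unavoidable; it still vanishes as $\varepsilon\to 0$, so the conclusion of the proposition is unaffected, but the bound you stated is too strong.

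Two smaller points. First, for the second assertion, the paper does not integrate the conservation law: instead it estimates $\frac{d}{dt}\|g^\varepsilon(t)-f_0\|_{L^2}^2$ directly by pairing the PDE against $g^\varepsilon-f_0$, which yields a bound in terms of the (assumed bounded) derivatives of $f_0$ rather than of $g^\varepsilon$, avoiding a propagation-of-regularity issue that your argument implicitly invokes through $\sup_s\|\nabla_x\cdot\langle v g^\varepsilon(s)\rangle\|$. Second, your observation that $\mathcal{M}$ samples $g^\varepsilon$ at delayed times, so that $\int g^\varepsilon(t)\,\mathcal{M} g^\varepsilon(t)\,\dd v$ is not a standard symmetric quadratic form, is a legitimate subtlety; but the paper also takes this for granted, asserting ``positivity of $-\mathcal{L}^G$'' and a ``smallest eigenvalue $\lambda>0$'' as if $\mathcal{L}^G$ were a velocity-only operator, so this is a shared gap rather than a flaw specific to your argument.
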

\begin{proof}
    We introduce $R_{\eta_\varepsilon} := g^\varepsilon - \langle g^\varepsilon \rangle$. 
    By applying the PDE in~\eqref{eq:RescaledLinearLandauCauchy} to $R_{\eta_\varepsilon}$, we obtain
    \begin{equation}
        \begin{aligned}
            &(\partial_t + \eta_\varepsilon v \cdot \nabla_x - \eta_\varepsilon  (v \times B) \cdot \nabla_v) R_{\eta_\varepsilon} \\
            &= (\partial_t + \eta_\varepsilon v \cdot \nabla_x - \eta_\varepsilon  (v \times B) \cdot \nabla_v)(g^\varepsilon - \langle g^\varepsilon \rangle)\\
            &= \eta^2_\varepsilon \mathcal{L}^G g^\varepsilon - (\partial_t + \eta_\varepsilon v \cdot \nabla_x - \eta_\varepsilon  (v \times B) \cdot \nabla_v) \langle g^\varepsilon \rangle\\
            &= \eta^2_\varepsilon \mathcal{L}^G ( g^\varepsilon - \langle g^\varepsilon \rangle) - (\partial_t + \eta_\varepsilon v \cdot \nabla_x) \langle g^\varepsilon \rangle,
        \end{aligned}
    \end{equation}
    where the last line follows after noting that $\langle g^\varepsilon \rangle$ is not a function of $v$. Therefore, the expression from above is equivalent to 
    \begin{equation}
        (\partial_t + \eta_\varepsilon v \cdot \nabla_x - \eta_\varepsilon  (v \times B) \cdot \nabla_v) R_{\eta_\varepsilon} = \eta^2_\varepsilon \mathcal{L}^G R_{\eta_\varepsilon} + \varphi
    \end{equation}
    with
    \begin{equation}
        \begin{aligned}
            \varphi&:=-(\eta_\varepsilon v \cdot \nabla_x \langle g^\varepsilon \rangle + \partial_t \langle g^\varepsilon \rangle) = -\left(\eta_\varepsilon v \cdot \nabla_x \langle g^\varepsilon \rangle + \int_{S_1}\partial_t g^\varepsilon \dd v\right)\\
            &= -\eta_\varepsilon v \cdot \nabla_x \langle g^\varepsilon \rangle +\frac{1}{2\pi}\int_{S_1}\eta_\varepsilon (v \cdot \nabla_x -  (v \times B) \cdot \nabla_v)g^\varepsilon\dd v -  \frac{\eta_\varepsilon^2}{2 \pi}\int_{S_1} \mathcal{L}^G g^\varepsilon \dd v \\
            &=\eta_\varepsilon \left(\frac{1}{2 \pi}\int_{S_1}v \cdot \nabla_x g^\varepsilon\dd v - v \cdot \nabla_x \langle g^\varepsilon \rangle \right).
        \end{aligned}
    \end{equation}
    The last equality follows directly from the divergence theorem applied to $\langle \mathcal{L}^G g^\varepsilon \rangle$. For $\hat{n}$ being the the outward pointing unit normal on $S_1$ and $\dd \sigma$ the corresponding surface measure, we have
    \begin{equation} \label{eq:DivTheoremLandauOperator}
        \int_{S_1} \mathcal{L}^G g^\varepsilon \dd v = \int_{\partial S_1} \nabla_\abs{v} g^\varepsilon \cdot \hat{n} \dd \sigma = 0,
    \end{equation}
    since $\partial S_1 = \varnothing$. 
    By a similar reasoning, the external force contributing by $-\eta_\varepsilon  (v \times B) \cdot \nabla_v g^\varepsilon$ integrated with respect to $v$ over $S_1$ vanishes too. 
    Therefore, $\varphi$ can be uniformly bounded in time:
    \begin{equation}
        \sup_{t \leq \bar{t}} \norm{\varphi}_{L^2}  \leq \sup_{t \leq \bar{t}}  C \eta_\varepsilon \norm{\nabla_x g^\varepsilon }_{L^2}  \leq \eta_\varepsilon C \bar{t}.
    \end{equation}
    Let us denote the inner product in $L^2$ by $(\cdot,\cdot)$. Then, we obtain the following equation for $R_{\eta_\varepsilon}$
    \begin{equation}\label{eq:ODE Reta}
        \begin{aligned}
            \frac{1}{2} \frac{\dd }{\dd t} \norm{R_{\eta_\varepsilon}(t)}^2_{L^2} &= \eta_\varepsilon^2 (R_{\eta_\varepsilon},\mathcal{L}^G R_{\eta_\varepsilon}) + (R_{\eta_\varepsilon},\varphi) \\
            &\leq -\eta_\varepsilon^2(R_{\eta_\varepsilon},-\mathcal{L}^G R_{\eta_\varepsilon}) + \norm{R_{\eta_\varepsilon}}_{L^2} \norm{\varphi}_{L^2}\\
            &\leq - \eta_\varepsilon^2 \lambda \norm{R_{\eta_\varepsilon}}^2_{L^2} + \norm{R_{\eta_\varepsilon}}_{L^2} \norm{\varphi}_{L^2}.
        \end{aligned}
    \end{equation}
    In the last line from above we use the positivity of $-\mathcal{L}^G$ and introduce the smallest eigenvalue $\lambda>0$.
    By solving~\eqref{eq:ODE Reta}, we find
    \begin{equation}
        \begin{aligned}
            \norm{R_{\eta_\varepsilon}(t)}_{L^2} &\leq e^{-\lambda\eta_\varepsilon^2 t} \norm{R_{\eta_\varepsilon}(0)}_{L^2} + \int_0^t  e^{-\lambda \eta_\varepsilon^2 (t-s)} \norm{\varphi (s)}_{L^2} \dd s\\
            &\leq e^{-\lambda\eta_\varepsilon^2 t} \norm{R_{\eta_\varepsilon}(0)}_{L^2} + \frac{C}{\lambda \eta_\varepsilon} \left(1-e^{-\lambda \eta_\varepsilon^2 t}\right)
        \end{aligned}
    \end{equation}
    and thus, $\norm{R_{\eta_\varepsilon}(t)}_{L^2} \xrightarrow[]{\varepsilon \to 0}0$ for all $t \in (0,\bar{t}]$, which proves the first part of the claim. 
    We proceed with
    \begin{equation}
        \begin{aligned}
            \frac{1}{2} \frac{\dd}{\dd t} \norm{g^\varepsilon (t) - f_0}^2_{L^2} &= (g^\varepsilon - f_0, (\partial_t + \eta_\varepsilon v \cdot \nabla_x - \eta_\varepsilon (v \times B) \cdot \nabla_v)(g^\varepsilon-f_0))\\
            &= (g^\varepsilon - f_0,(\partial_t + \eta_\varepsilon v \cdot \nabla_x - \eta_\varepsilon (v \times B) \cdot \nabla_v)g^\varepsilon)\\
            &- (g^\varepsilon - f_0,( \eta_\varepsilon v \cdot \nabla_x - \eta_\varepsilon (v \times B) \cdot \nabla_v)f_0),
        \end{aligned}
    \end{equation}
    where we used that $f_0$ does not depend on time. We end up with
    \begin{equation} \label{eq:ClosenessTof0 ODE}
        \begin{aligned}
            &\frac{1}{2} \frac{\dd}{\dd t} \norm{g^\varepsilon (t) - f_0}^2_{L^2} \\
            &= (g^\varepsilon - f_0 , \eta_\varepsilon^2 \mathcal{L}^G g^\varepsilon) - \eta_\varepsilon (g^\varepsilon - f_0 ,  v \cdot \nabla_x f_0) + \eta_\varepsilon(g^\varepsilon - f_0 , (v \times B) \cdot \nabla_v f_0)\\
            &\leq \eta_\varepsilon^2 (g^\varepsilon - f_0 , \mathcal{L}^G f_0)- \eta_\varepsilon (g^\varepsilon - f_0 ,  v \cdot \nabla_x f_0) + \eta_\varepsilon (g^\varepsilon - f_0 , (v \times B) \cdot \nabla_v f_0)\\
            &\leq \norm{g^\varepsilon - f_0}_{L^2} \left(\eta_\varepsilon  \norm{\nabla_x f_0}_{L^2}+ \eta_\varepsilon \abs{B} \norm{\nabla_v f_0}_{L^2} + \eta_\varepsilon^2 \norm{\mathcal{L}^G f_0}_{L^2}\right).
        \end{aligned}
    \end{equation}
    We solve~\eqref{eq:ClosenessTof0 ODE} and obtain
    \begin{equation}
        \norm{g^\varepsilon(t) - f_0}_{L^2} \leq t \left(\eta_\varepsilon  \norm{\nabla_x f_0}_{L^2}+ \eta_\varepsilon \abs{B} \norm{\nabla_v f_0}_{L^2} + \eta_\varepsilon^2 \norm{\mathcal{L}^G f_0}_{L^2}\right),
    \end{equation}
    where we have set $g^\varepsilon(0) = f_0$. 
    The proof is concluded by 
    \begin{equation}
        \begin{aligned}
            \norm{g^\varepsilon(t_{\eta_\varepsilon}) - \langle f_0 \rangle}_{L^2} &\leq \norm{g^\varepsilon(t_{\eta_\varepsilon}) - \langle g^\varepsilon (t_{\eta_\varepsilon})\rangle}_{L^2} + \norm{ \langle g^\varepsilon (t_{\eta_\varepsilon})\rangle - \langle f_0 \rangle}_{L^2}\\
            &\leq \sup_{0<t \leq \bar{t}} \norm{g^\varepsilon(t) - \langle g^\varepsilon (t)\rangle}_{L^2}+ \norm{ \langle g^\varepsilon (t_{\eta_\varepsilon}) - f_0 \rangle}_{L^2}\\
            &\leq \sup_{0<t \leq \bar{t}} \norm{g^\varepsilon(t) - \langle g^\varepsilon (t)\rangle}_{L^2}+ \langle \norm{  g^\varepsilon (t_{\eta_\varepsilon}) - f_0 }_{L^2} \rangle,
        \end{aligned}
    \end{equation}
    which implies
    \begin{equation}
        \lim_{\varepsilon\to 0}\norm{g^\varepsilon (t_{\eta_\varepsilon}) - \langle f_0 \rangle}_{L^2(\mathbb{R}^2 \times S_1)} = 0 .
    \end{equation} 
\end{proof}

\noindent In the next proposition we introduce the Hilbert expansion of $g^\varepsilon$. The following result ensures that $g^\varepsilon$ and $\rho$ satisfying the heat equation in~\eqref{eq:HeatEquation} have the same asymptotics.
\begin{proposition} \label{prop: Hilbert Expansion}
    Let $g^\varepsilon$ solve~\eqref{eq:RescaledLinearLandauCauchy}. Then, for any $t \in [0,\bar{t}\,]$ and $\abs{B} \in \left[0, \frac{8 \pi}{3}\right)$ 
    \begin{equation}
        \lim_{\varepsilon\to 0}\norm{g^\varepsilon(t,\cdot,\cdot) - \rho(t,\cdot)}_{L^2(\mathbb{R}^2 \times S_1)} = 0,
    \end{equation}
    where $\rho$ satisfies the heat equation in~\eqref{eq:HeatEquation}. The diffusion coefficient $D_B$ is given by the Green-Kubo relation in~\eqref{eq:GreenKubo} and can be expressed as in~\eqref{eq:D-Split}.
\end{proposition}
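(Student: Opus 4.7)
The plan is a two-scale/Hilbert expansion of $g^{\varepsilon}$ in inverse powers of $\eta_{\varepsilon}$. Writing $\mathcal{T}:= v\cdot\nabla_x-(v\times B)\cdot\nabla_v$, I would postulate
\begin{equation}
g^{\varepsilon}(t,x,v)=g_{0}(t,x)+\eta_{\varepsilon}^{-1}g_{1}(t,x,v)+\eta_{\varepsilon}^{-2}g_{2}(t,x,v)+u^{\varepsilon}(t,x,v),
\end{equation}
plug into~\eqref{eq:RescaledLinearLandauCauchy}, and match orders in $\eta_{\varepsilon}$. At order $\eta_{\varepsilon}^{2}$ one gets $\mathcal{L}^{G}g_{0}=0$, which forces $g_{0}$ to lie in $\ker\mathcal{L}^{G}$; since collisions preserve only the constants on $S_{1}$, this means $g_{0}=\rho(t,x)$. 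At order $\eta_{\varepsilon}^{1}$, $\mathcal{L}^{G}g_{1}=\mathcal{T}g_{0}=v\cdot\nabla_{x}\rho$, whose right-hand side has zero $v$-average, so Lemma~\ref{lem: Inversion LG} lets me set $g_{1}=(\mathcal{L}^{G})^{-1}(v\cdot\nabla_{x}\rho)$.

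The heat equation and the Green--Kubo formula emerge as the solvability condition at order $\eta_{\varepsilon}^{0}$: $\mathcal{L}^{G}g_{2}=\partial_{t}g_{0}+\mathcal{T}g_{1}$ has a solution iff the $v$-average of the right-hand side vanishes. Averaging yields
\begin{equation}
\partial_{t}\rho+\tfrac{1}{2\pi}\int_{S_{1}}v\cdot\nabla_{x}g_{1}\,\dd v=0,
\end{equation}
which I rewrite via $g_{1}=(\mathcal{L}^{G})^{-1}(v\cdot\nabla_{x}\rho)$ as $\partial_{t}\rho=D_{ij}\partial_{x_{i}}\partial_{x_{j}}\rho$ with $D_{ij}=-\tfrac{1}{2\pi}\int_{S_{1}}v_{i}(\mathcal{L}^{G})^{-1}v_{j}\,\dd v$. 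Invariance of $\mathcal{L}^{G}$ under the planar rotations that preserve $B$ forces the symmetric part of the tensor $D_{ij}$ to be a multiple of the identity (the antisymmetric Hall-type part contracts to zero against $\partial_{x_{i}}\partial_{x_{j}}$), so $\partial_{t}\rho=D_{B}\Delta_{x}\rho$. Inserting Proposition~\ref{prop:D-explicit} into this expression for $D_{B}$ produces the series~\eqref{eq:D-Split}; once $\rho$ is fixed as the solution of~\eqref{eq:HeatEquation}, $g_{2}$ is obtained by a second application of $(\mathcal{L}^{G})^{-1}$.

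It remains to control the remainder $u^{\varepsilon}$, which satisfies
\begin{equation}
(\partial_{t}+\eta_{\varepsilon}\mathcal{T}-\eta_{\varepsilon}^{2}\mathcal{L}^{G})u^{\varepsilon}=F^{\varepsilon},\qquad u^{\varepsilon}(0)=-\eta_{\varepsilon}^{-1}g_{1}(0)-\eta_{\varepsilon}^{-2}g_{2}(0),
\end{equation}
where $F^{\varepsilon}=-\eta_{\varepsilon}^{-1}\partial_{t}g_{1}-\mathcal{T}g_{2}-\eta_{\varepsilon}^{-2}\partial_{t}g_{2}$ is uniformly bounded in $L^{2}$ thanks to the regularity of $f_{0}$ (which transfers to $\rho$ via the heat equation and then to $g_{1},g_{2}$ through $(\mathcal{L}^{G})^{-1}$, which is bounded on zero-average $L^{\infty}$ by Lemma~\ref{lem: Inversion LG}). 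Testing against $u^{\varepsilon}$ and using the dissipativity $(u^{\varepsilon},-\mathcal{L}^{G}u^{\varepsilon})\geq\lambda\|u^{\varepsilon}-\langle u^{\varepsilon}\rangle\|_{L^{2}}^{2}$, together with the fact that $\mathcal{T}$ is skew-adjoint in $L^{2}(\mathbb{R}^{2}\times S_{1})$, gives a Gronwall inequality of the type $\tfrac{d}{dt}\|u^{\varepsilon}\|_{L^{2}}^{2}\leq -2\eta_{\varepsilon}^{2}\lambda\|u^{\varepsilon}_{\perp}\|_{L^{2}}^{2}+C\|u^{\varepsilon}\|_{L^{2}}$. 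Iterating as in Proposition~\ref{prop:ConvergenceToAverage} (first close the orthogonal component via the spectral gap, then absorb the macroscopic component through $\langle F^{\varepsilon}\rangle=O(\eta_{\varepsilon}^{-1})$) shows $\|u^{\varepsilon}\|_{L^{2}}\to 0$, which together with $\eta_{\varepsilon}^{-1}\|g_{1}\|_{L^{2}}+\eta_{\varepsilon}^{-2}\|g_{2}\|_{L^{2}}\to 0$ completes the proof.

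The main obstacle is the last step: the dissipative operator controls only the $v$-dependent part $u^{\varepsilon}_{\perp}$, while the skew-adjoint transport $\eta_{\varepsilon}\mathcal{T}$ mixes the macroscopic and microscopic parts at a rate diverging with $\eta_{\varepsilon}$. Closing the estimate therefore requires propagating enough regularity in $x$ for $\rho$ (to bound $\mathcal{T}g_{2}$) and exploiting that the zero-average projection of $F^{\varepsilon}$ is precisely what the heat equation for $\rho$ is designed to cancel, so that $\langle u^{\varepsilon}\rangle$ is driven only by the small, $\eta_{\varepsilon}^{-1}$-order residuals.
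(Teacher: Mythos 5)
Your proposal follows the same route as the paper: a truncated Hilbert expansion in $\eta_\varepsilon^{-1}$, solvability conditions at orders $\eta_\varepsilon^{2},\eta_\varepsilon^{1},\eta_\varepsilon^{0}$ producing $g_{0}\in\ker\mathcal{L}^{G}$, $g_{1}=(\mathcal{L}^{G})^{-1}(v\cdot\nabla_{x}\rho)$, the heat equation with the Green--Kubo coefficient, and an $L^{2}$ energy estimate for the remainder using the dissipativity of $-\mathcal{L}^{G}$. The only structural difference is cosmetic: the paper writes the remainder as $\eta_\varepsilon^{-1}R_{\eta_\varepsilon}$ (so that the source $A_{\eta_\varepsilon}$ is $O(1)$ and one only needs $R_{\eta_\varepsilon}$ bounded), whereas you leave it unscaled. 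Your rotational-symmetry argument for $D^{ij}=D_B\delta_{ij}$ (with the antisymmetric Hall part killed by contraction against $\partial_{x_i}\partial_{x_j}$) is slightly more careful than the paper's ``by symmetry'' shortcut.

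One slip, however, changes your assessment of the final step. Re-deriving the remainder equation with your unscaled ansatz gives $F^{\varepsilon}=-\eta_\varepsilon^{-1}\bigl(\partial_{t}g_{1}+\mathcal{T}g_{2}\bigr)-\eta_\varepsilon^{-2}\partial_{t}g_{2}$, i.e.\ the $\mathcal{T}g_{2}$ term also carries a factor $\eta_\varepsilon^{-1}$, which you dropped. Once corrected, $\|F^{\varepsilon}\|_{L^{2}}=O(\eta_\varepsilon^{-1})$ and $\|u^{\varepsilon}(0)\|_{L^{2}}=O(\eta_\varepsilon^{-1})$, so the crude Gronwall bound $\|u^{\varepsilon}(t)\|_{L^{2}}\leq\|u^{\varepsilon}(0)\|_{L^{2}}+\int_{0}^{t}\|F^{\varepsilon}\|_{L^{2}}\,\dd s=O(\eta_\varepsilon^{-1})$ already closes the argument---no spectral gap, no separate treatment of $\langle u^{\varepsilon}\rangle$ versus $u^{\varepsilon}_{\perp}$ is needed. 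Your ``main obstacle'' paragraph is therefore a worry about a difficulty that, in this particular two-term truncation, is not actually present: the transport $\eta_\varepsilon\mathcal{T}$ is skew and drops out of the $L^{2}$ estimate, and the source is already small. This is exactly how the paper concludes (in the rescaled version: $\frac{\dd}{\dd t}\|R_{\eta_\varepsilon}\|_{L^{2}}\leq\|A_{\eta_\varepsilon}\|_{L^{2}}$, whence $\|R_{\eta_\varepsilon}\|_{L^{2}}\leq C\bar t$ and $\eta_\varepsilon^{-1}R_{\eta_\varepsilon}\to 0$).
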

\begin{proof}
    We start with the truncated Hilbert expansion of $g^\varepsilon$:
    \begin{equation} \label{eq: Truncated Hilbert}
        g^\varepsilon(t,x,v) = g^{(0)} (t,x,v) + \frac{1}{\eta_\varepsilon} g^{(1)}(t,x,v) + \frac{1}{\eta_\varepsilon^2} g^{(2)}(t,x,v) + \frac{1}{\eta_\varepsilon}R_{\eta_\varepsilon},
    \end{equation}
    where $R_{\eta_\varepsilon}$ denotes the reminder of the series.
    We insert the expression~\eqref{eq: Truncated Hilbert} into~\eqref{eq:RescaledLinearLandauCauchy} and obtain
    \begin{equation} \label{eq:HilbertInPDE}
        \begin{aligned}
            &(\partial_t + \eta_\varepsilon v \cdot \nabla_x - \eta_\varepsilon (v \times B) \cdot \nabla_v)g^\varepsilon\\
            &= (\partial_t + \eta_\varepsilon v \cdot \nabla_x - \eta_\varepsilon (v \times B) \cdot \nabla_v)\left(g^{(0)} + \frac{1}{\eta_\varepsilon} g^{(1)}+ \frac{1}{\eta_\varepsilon^2} g^{(2)} + \frac{1}{\eta_\varepsilon}R_{\eta_\varepsilon}\right)\\
            &= \eta_\varepsilon^2 \mathcal{L}^G g^\varepsilon = \eta_\varepsilon^2 \mathcal{L}^G \left(g^{(0)} + \frac{1}{\eta_\varepsilon} g^{(1)}+ \frac{1}{\eta_\varepsilon^2} g^{(2)} + \frac{1}{\eta_\varepsilon}R_{\eta_\varepsilon}\right).
        \end{aligned}
    \end{equation}
    By organising~\eqref{eq:HilbertInPDE} with respect to the powers of $\eta_\varepsilon$ and defining $A_{\eta_\varepsilon}(t):= \partial_t g^{(1)} + \frac{1}{\eta_\varepsilon}\partial_t g^{(2)}+v\cdot\nabla_x g^{(2)} - (v \times B) \cdot \nabla_v  g^{(2)}$, we have the the following set of equations
    \begin{enumerate}[label={(\roman*)}]
        \item \label{item:one} $\mathcal{L}^G g^{(0)} = 0$,
        \item \label{item:two} $v \cdot \nabla_x g^{(0)} -(v \times B) \cdot \nabla_v g^{(0)} = \mathcal{L}^G g^{(1)}$,
        \item $\partial_t g^{(0)}+ v \cdot \nabla_x g^{(1)}-(v \times B) \cdot \nabla_v g^{(1)} = \mathcal{L}^G g^{(2)}$
        \item $(\partial_t + \eta_\varepsilon v \cdot \nabla_x - \eta_\varepsilon (v \times B) \cdot \nabla_v) R_{\eta_\varepsilon} = \eta_\varepsilon^2 \mathcal{L}^G R_{\eta_\varepsilon} - A_{\eta_\varepsilon}(t)$.
    \end{enumerate}
    The equation~\ref{item:one} yields that $g^{(0)}$ must lie in the null space of the generalized linear Boltzmann operator. Notice that functions satisfying this condition are constants with respect to the velocity variable.    
    As a consequence of~\ref{item:one}, item~\ref{item:two} has only a solution if the left hand side belongs to 
    \begin{equation}
    	\left(\mathrm{ker} \left(\mathcal{L}^G\right)\right)^{\perp} := \left\{h \in L^{2}(S_1 ): \int_{S_1} h(v) \dd v = 0 \right\}.
    \end{equation}
        By integrating~\ref{item:two} with respect to $v$, we obtain
    \begin{equation}
        \int_{S_1}  \left( v \cdot \nabla_x -(v \times B) \cdot \nabla_v \right) g^{(0)}  \dd v =  \int_{S_1} v \cdot \nabla_x  g^{(0)}  \dd v =0,
    \end{equation}
    since $g^{(0)} $ does not depend on $v$ and therefore $ v \cdot \nabla_x  g^{(0)} $ is an odd function with respect to the velocity variable. In Section~\ref{sec:invertibility} we have shown that  $\mathcal{L}^G$ is invertible, hence
    \begin{equation} \label{eq:Identity g1}
        g^{(1)} = \left(\mathcal{L}^G\right)^{-1} v \cdot \nabla_x g^{(0)} . 
    \end{equation}
Notice that, since $g^{(0)}\in L^{2}(S_1)$ and $\left(\mathcal{L}^G\right)^{-1}$ is bounded in operator norm (see~\eqref{eq:operator-norm-L-inverse}), $g^{(1)}\in L^2(S^1)$, i.e.
    \begin{equation*}
    \norm{g^{(1)}}_{L^2}\leq C\norm{\left(\mathcal{L}^G\right)^{-1}}\norm{\nabla_x g^{(0)}}_{L^2}.
    \end{equation*}
    For the third item of the list, we find that the right hand side vanishes:
    \begin{equation} \label{eq: Integral v Lg2}
        \int_{S_1}\mathcal{L}^G g^{(2)} \dd v = \int_{\partial S_1} \nabla_\abs{v} g^{(2)} \cdot \hat{n} \dd \sigma  = 0
    \end{equation}
    by a similar argument as in~\eqref{eq:DivTheoremLandauOperator}. If we integrate the left hand side with respect to $v$ and use the result from above, we get
    \begin{equation} \label{eq:Heat g0}
        \partial_t g^{(0)} + \frac{1}{2 \pi}\int_{S_1}v \cdot \nabla_x g^{(1)} \dd v = 0
    \end{equation}
    after applying the divergence theorem to $(v \times B) \cdot \nabla_v g^{(1)}$.
    We continue by inserting the expression for $g^{(1)}$ in~\eqref{eq:Identity g1} into~\eqref{eq:Heat g0}:
    \begin{equation}
        \begin{aligned}
            &\partial_t g^{(0)} + \frac{1}{2 \pi}\int_{S_1} v \cdot \nabla_x \left \{ \left(\mathcal{L}^G\right)^{-1} v \cdot \nabla_x g^{(0)} \right \} \dd v\\
            &= \partial_t g^{(0)} + \frac{1}{2 \pi}\int_{S_1} v \cdot  \left(\mathcal{L}^G\right)^{-1} v \Delta_x g^{(0)}  \dd v =0.
        \end{aligned}
    \end{equation}
    We define the diffusion coefficient as
    \begin{equation} \label{eq:DiffusionIndex-Form}
        D_B^{ij}:=  \frac{1}{2 \pi}\int_{S_1}  v_i \left(-\mathcal{L}^G\right)^{-1} v_j \dd v
    \end{equation}
    and note by symmetry that $D_B^{ij} = D_B \delta_{ij}$. 
    By the negativity of $\left(\mathcal{L}^G\right)^{-1}$ follows $D_B>0$ and we derive the Green-Kubo relation from~\eqref{eq:DiffusionIndex-Form}:
    \begin{equation}
        \begin{aligned}
            D_B= \frac{1}{2\pi} \int_{S_1}  v \cdot \left(-\mathcal{L}^G\right)^{-1} v \dd v &= \frac{1}{2 \pi}\int_{S_1} \int_0^\infty v \cdot e^{\mathcal{L}^G t} v \dd t \dd v\\
            & =  \int_0^\infty \mathbb{E} [v \cdot V_t(v)] \dd t.
        \end{aligned}
    \end{equation}
    With expression~\eqref{eq:DiffusionIndex-Form}, equation~\eqref{eq:Heat g0} simplifies to
    \begin{equation}
        \partial_t g^{(0)} - D_B \Delta_x  g^{(0)}  = 0.
    \end{equation}
    Let $g^{(0)}$ satisfy the initial condition $g^\varepsilon(t=0,x,v) = g^{(0)}(0,x,v)= g^{(0)}(0,x)$. Clearly, $g^{(0)} \in L^2$, since it satisfies the heat equation and by~\eqref{eq:Identity g1}, the regularity of $g^{(0)}$ also implies that $g^{(1)}\in L^2$.

\noindent     In~\eqref{eq: Integral v Lg2} we proved that $\int_{S_1} \mathcal{L}^G g^{(2)} \dd v =0$. Hence,
    \begin{equation} \label{eq:Derivative v g(2)}
        \begin{aligned}
            g^{(2)} = &\left(\mathcal{L}^G\right)^{-1} \left(\partial_t g^{(0)}+ v \cdot \nabla_x g^{(1)}-(v \times B) \cdot \nabla_v g^{(1)} \right)\\
            = &\left(\mathcal{L}^G\right)^{-1} \bigg(D_B \Delta_x  g^{(0)} + v \cdot \left(\mathcal{L}^G\right)^{-1} v \Delta_x g^{(0)}\\
            &-(v \times B) \cdot \nabla_v \left \{\left(\mathcal{L}^G\right)^{-1} v \cdot \nabla_x g^{(0)} \right \}\bigg).\\
        \end{aligned}
    \end{equation}
   By the same argument as above, we conclude that the $L^2$-norm of $g^{(2)}$ stays bounded as well.

\noindent    The proof is complete if we show that the $L^2$-norm of $R_{\eta_\varepsilon}$ is finite for any time $t\in [0,\bar{t}\,]$. We proceed with
    \begin{equation}
        \begin{aligned}
            \frac{1}{2} \frac{\dd}{\dd t} \norm{R_{\eta_\varepsilon}(t)}^2_{L^2} &= - \eta_\varepsilon^2 (R_{\eta_\varepsilon}, - \mathcal{L}^G R_{\eta_\varepsilon}) - (R_{\eta_\varepsilon}, A_{\eta_\varepsilon}(t))\\
            &\leq - \lambda \eta_\varepsilon^2 \norm{R_{\eta_\varepsilon}}^2_{L^2} + \norm{R_{\eta_\varepsilon}}_{L^2} \norm{A_{\eta_\varepsilon} (t)}_{L^2},
        \end{aligned}
    \end{equation}
    where we used the same argument as in~\eqref{eq:ODE Reta}. The ODE from above implies that 
    \begin{equation}
        \frac{\dd}{\dd t} \norm{R_{\eta_\varepsilon}(t)}_{L^2} \leq  \norm{A_{\eta_\varepsilon} (t)}_{L^2}.
    \end{equation}
    In order to find a suitable bound for $A_{\eta_\varepsilon}$ in $L^2$, we consider
    \begin{equation}
        \begin{aligned}
            \partial_t g^{(1)} &= \partial_t \left(\left(\mathcal{L}^G\right)^{-1}  v \cdot \nabla_x g^{(0)}\right)\\
            &= \left(\mathcal{L}^G\right)^{-1} \left( \Dot{v} \cdot \nabla_x g^{(0)} + v \cdot \nabla_x \partial_t g^{(0)}\right)\\
            &=\left(\mathcal{L}^G\right)^{-1} \left(\Dot{v} \cdot \nabla_x g^{(0)} + v \cdot \nabla_x \left \{D_B \Delta_x g^{(0)} \right \}\right).
        \end{aligned}
    \end{equation}
    By the last line, we see that $\partial_t g^{(1)}$ is finite in $L^2$. A similar result can be also obtained for $\partial_t g^{(2)}$, $v \cdot \nabla_x g^{(2)}$ as well as $-(v \times B) \cdot \nabla_v g^{(2)}$ in terms of derivatives of $g^{(0)}$ with respect to $x$ and $v$. With this we showed that $\norm{A_{\eta_\varepsilon}}_{L^2}$ is uniformly bounded for all $t \in [0,\bar{t}\,]$. More precisely,
    \begin{equation}
        \norm{R_{\eta_\varepsilon}(t)}_{L^2} \leq \int_0^{\bar{t}}  \norm{A_{\eta_\varepsilon}(s)}_{L^2} \dd s \leq C \bar{t}.
    \end{equation}
   Indeed, the Hilbert expansion in~\eqref{eq: Truncated Hilbert} converges to $g^{(0)} \equiv \rho$ as $\varepsilon \to 0$, which finalizes the proof.
    
\end{proof}
\begin{remark}
    It is sufficient to consider the truncated Hilbert expansion in~\eqref{eq: Truncated Hilbert}, since the general ansatz 
    \begin{equation}
        g^\varepsilon = \sum_{k=0}^\infty \eta^{-k}_\varepsilon g^{(k)}
    \end{equation}
    gives the recursion formula below
    \begin{equation}
        \partial_t g^{(n)} + v \cdot \nabla_x g^{(n+1)} - (v \times B) \cdot \nabla_v  g^{(n+1)} = \mathcal{L}^G g^{(n+2)}, \qquad n \geq 1.
    \end{equation}
    The method in the proof can be iterated for any $n$.
\end{remark}
\subsection{Conclusion of the Proof for Theorem~\ref{thm:MainTheorem}}
We conclude the proof by arguing on the convergence of $h_\varepsilon$ and $\rho$. We want to show
\begin{equation}
    \lim_{\varepsilon \to 0}\norm{h_\varepsilon(t) - \rho(t)}_{L^2} = 0.
\end{equation}
By the triangle inequality and the identity for $h_\varepsilon$ we obtain that
\begin{equation} \label{eq:TriangleConclusionProof}
    \begin{aligned}
    \norm{h_\varepsilon(t) - \rho(t)}_{L^2}  &= \norm{f^\varepsilon(\eta_\varepsilon t) - \rho(t)}_{L^2} \\
    &\leq \norm{f^\varepsilon(\eta_\varepsilon t) - g^\varepsilon(t)}_{L^2} +    \norm{g^\varepsilon(t) - \rho(t)}_{L^2}.
    \end{aligned}
\end{equation}
Indeed, $g^\varepsilon(t) \equiv \bar{f}^{G,\varepsilon}(\eta_\varepsilon t)$ is the solution to~\eqref{eq:TimeRescaledBoltzmann}. The last term in~\eqref{eq:TriangleConclusionProof} goes to zero as $\varepsilon \to 0$ according to Proposition~\ref{prop: Hilbert Expansion} and therefore we focus on controlling $\norm{f^\varepsilon(\eta_\varepsilon t) - g^\varepsilon(t)}_{L^2}$. By Proposition~\ref{prop:ErrorEstimates} and Remark~\ref{rem:kinetic-to-longer}, $f^\varepsilon(\eta_\varepsilon t)$ and $g^\varepsilon(t)$ converges in $L^1$ for any $t\in [0, \bar{t}]$ if 
\begin{equation}
    \varepsilon^{\frac{1}{2}}\eta_\varepsilon^5 \to 0
\end{equation}
as $\varepsilon \to 0$. Furthermore, Theorem~\ref{thm:MainTheorem} states that the initial datum $f_0$ has compact support and as a consequence $f^\varepsilon(\eta_\varepsilon t)$ as well as $g^\varepsilon(t)$ are also compactly supported for any $t \in [0,\bar{t}]$. This reasoning follows directly from~\eqref{eq:fep}. Hence, the convergence in $L^1$ implies further
\begin{equation}
    \norm{f^\varepsilon(\eta_\varepsilon t) - g^\varepsilon(t)}_{L^2} \xrightarrow[]{\varepsilon \to 0} 0,
\end{equation}
which concludes the proof of Theorem~\ref{thm:MainTheorem}.
\bigskip

{\bf Acknowledgments.} 
A.N.~and C.S.~are grateful to H.~Spohn for suggesting the reference~\cite{spohn_lorentz_1978}, which lead to the completion of this work.
D.N. and C.S. acknowledge the support of the Swiss National Science Foundation through the NCCR SwissMAP and the SNSF Eccellenza project PCEFP2\_181153, and by the Swiss State Secretariat for Research and Innovation through the ERC Starting Grant project P.530.1016 (AEQUA).
The research of A.N. has been supported by the project PRIN 2022 (Research Projects of National Relevance) - Project code 202277WX43.

\printbibliography

\end{document}